\numberwithin{equation}{section}
\newcommand{\bbE}{\mathbb{E}}
\newcommand{\bbN}{\mathbb{N}}
\newcommand{\bbR}{\mathbb{R}}
\newcommand{\cC}{\mathcal{C}}
\newcommand{\cD}{\mathcal{D}}
\newcommand{\cH}{\mathcal{H}}
\newcommand{\cO}{\mathcal{O}}
\newcommand{\cP}{\mathcal{P}}
\newcommand{\cQ}{\mathcal{Q}}
\newcommand{\cT}{\mathcal{T}}
\newcommand{\cW}{\mathcal{W}}
\newcommand{\scrD}{\mathscr{D}}
\newcommand{\kvec}{\mathbf{k}}
\newcommand{\svec}{\mathbf{s}}
\newcommand{\xivec}{\boldsymbol{\xi}}
\newcommand{\norm}[2]{     \| #1       \|_{ #2 }}
\newcommand{\normiii}[2]{\vert\kern-0.25ex\vert\kern-0.25ex\vert #1 \vert\kern-0.25ex \vert\kern-0.25ex\vert_{ #2 }}
\newcommand{\Normiii}[2]{\left\vert\kern-0.25ex\left\vert\kern-0.25ex\left\vert #1 \right\vert\kern-0.25ex\right\vert\kern-0.25ex\right\vert_{ #2 }}
\newcommand{\scalar}[2]{     ( #1       )_{ #2 }}
\newcommand{\white}{\cW}
\newcommand{\rd}{\mathrm{d}}
\newcommand{\from}{\colon}
\DeclarePairedDelimiter\ceil{\lceil}{\rceil}
\DeclarePairedDelimiter\floor{\lfloor}{\rfloor}
\newcommand{\mv}[1]{{\boldsymbol{\mathrm{#1}}}}
\newcommand{\trsp}{\ensuremath{\top}}
\newcommand{\proper}{\mathsf}
\newcommand{\pN}{\proper{N}}
\newcommand{\lop}{P_{\ell,h}}
\newcommand{\rop}{P_{r,h}}
\newcommand{\lopex}{P_\ell}
\newcommand{\ropex}{P_r}
\newcommand{\lmat}{\mathbf{P}_\ell}
\newcommand{\rmat}{\mathbf{P}_r}
\newtheorem{lemma}{Lemma}[section]
\newtheorem{proposition}[lemma]{Proposition}
\newtheorem{theorem}[lemma]{Theorem}
\theoremstyle{remark}
\newtheorem{remark}[lemma]{Remark}
\theoremstyle{definition}
\definecolor{dbcolor}{RGB}{0,0,0}
\definecolor{kkcolor}{RGB}{0,0,0}
\newcommand{\db}[1]{{\color{dbcolor}#1}}
\newcommand{\kk}[1]{{\color{kkcolor}#1}}
\begin{document}

\title[The rational SPDE approach for Gaussian random fields]
	{The rational SPDE approach for Gaussian random fields with general smoothness}


\author{David Bolin$^a$ and Kristin Kirchner$^{a,b}$}
\address[D.~Bolin and K.~Kirchner]{
$^A$Department of Mathematical Sciences, Chalmers University of Technology and University of Gothenburg, G\"oteborg, Sweden; \\
$^B$Seminar for Applied Mathematics, ETH Z\"urich, Switzerland}

\email{{\normalfont{(D.~Bolin, corresponding author) }}david.bolin@kaust.edu.sa}


\thanks{
This work has been supported in part by the Swedish Research Council under grant
No.\ 2016-04187 and the Knut and Alice Wallenberg Foundation (KAW 20012.0067).
The authors thank Stig Larsson and Finn Lindgren for valuable comments on the manuscript.
}


\begin{abstract}
A popular approach for modeling and inference in spatial statistics is to represent Gaussian random fields as solutions to stochastic partial differential equations (SPDEs) of the form $L^{\beta}u = \white$, 
where $\white$ is Gaussian white noise, 
$L$ is a second-order differential operator, 
and $\beta>0$ is a parameter that determines the smoothness of $u$.
However, this approach has been limited 
to the case $2\beta\in\mathbb{N}$,
which excludes several important models 
and makes it necessary to keep $\beta$ fixed during inference.

We propose a new method, the rational SPDE approach, which in spatial dimension $d\in\bbN$ is applicable for any \kk{$\beta>d/4$}, and
thus remedies the mentioned limitation.
The presented scheme combines a finite element discretization with a rational approximation of the function $x^{-\beta}$ 
to approximate $u$.
For the resulting approximation,
an explicit rate of convergence 
to $u$ \kk{in mean-square sense is derived. 
Furthermore,} we show that \kk{our} method has
the same computational benefits
as in the restricted case $2\beta\in\bbN$.
Several numerical experiments and a statistical application 
are used to illustrate the accuracy of the method,
and to show that it facilitates likelihood-based inference for all model parameters including $\beta$. 
\end{abstract}

\keywords{fractional operators, 
	Mat\'ern covariances,
	non-stationary Gaussian fields, 
	spatial statistics, 
	stochastic partial differential equations.}

\subjclass[2010]{Primary: 60G15, 62M09, 62M30, 65C30, 65C60.}


\maketitle

\section{Introduction}

One of the main challenges in spatial statistics is to handle large data sets.
A reason for this is that the computational cost for likelihood evaluation and spatial prediction
is in general cubic in the number $N$ of observations of a Gaussian random field.
A tremendous amount of research has been devoted
to coping with this problem
and various methods have been suggested
\citep[see][for a recent review]{heaton2017methods}.

A common approach is to define an approximation $u_h$ of a Gaussian random field $u$
on a spatial domain $\cD$ via a basis expansion,
\begin{equation}\label{e:basisexp}
    u_h(\mv{s}) = \sum_{j=1}^n u_j \, \varphi_j(\mv{s}),
    \qquad
    \mv{s}\in\cD,
\end{equation}
where 
$\varphi_j \from \cD \to \bbR$ are fixed basis functions
and $\mv{u} = (u_1,\ldots, u_n)^{\trsp} \sim \pN(\mv{0},\mv{\Sigma}_{\mv{u}})$
are stochastic weights.
The computational effort can then be reduced by choosing \kk{$n\ll N$. 
However,} methods based on such low-rank 
approximations tend to remove fine-scale variations of the process.
%
For this reason, methods which instead exploit sparsity
for reducing the computational cost have gained popularity in recent years.
One can construct sparse approximations either of the covariance matrix of the measurements \citep{furrer2006covariance},
or of the inverse of the covariance matrix \citep{datta2016hierarchical}.
Alternatively, one can let the precision matrix~$\mv{\Sigma}_{\mv{u}}^{-1}$
of the weights in~\eqref{e:basisexp} be sparse,
as in the stochastic partial differential equation (SPDE) approach
by \cite{lindgren11}, where usually $n\approx N$. 
To increase the accuracy \db{further}, several combinations of
the methods mentioned above have been considered \citep[e.g.,][]{sang2012full}
and multiresolution approximations of the process
have been exploited
\citep{nychka2015multiresolution, katzfuss2017multi}.
However, \db{theoretical error bounds have not been derived for most of these methods},
which necessitates tuning these approximations for each specific model.

In this work we propose a new class of approximations,
\kk{whose members we refer to as \emph{rational 
stochastic partial differential equation approximations} 
or \emph{rational approximations} for short}.
Our approach is similar to some of the above methods
in the sense that an expansion \eqref{e:basisexp}
with compactly supported basis functions is exploited.
The main novelty is that neither the covariance matrix $\mv{\Sigma}_{\mv{u}}$
nor the precision matrix $\mv{\Sigma}_{\mv{u}}^{-1}$ of the weights $\mv{u}$
\kk{is} assumed to be sparse.
The covariance matrix is instead a product
\kk{$\mv{\Sigma}_{\mv{u}} = \mv{P} \mv{Q}^{-1}\mv{P}^{\trsp}$,
where $\mv{P}$ and $\mv{Q}$ are sparse matrices
and the sparsity pattern of $\mv{P}$} is a subset
of \kk{that} of~$\mv{Q}$.
We show that the resulting approximation
facilitates inference and prediction
at the same computational cost as a Markov approximation
with $\mv{\Sigma}_{\mv{u}}^{-1} = \mv{Q}$, and at a higher accuracy.

For the theoretical framework of our approach,
we consider a Gaussian random field on a bounded domain $\cD \subset \bbR^d$
which can be represented as the solution $u$ to the SPDE
\begin{align}\label{e:Lbeta}
    L^\beta u =  \white \quad\text{in }\cD,
\end{align}
where $\white$ is Gaussian white noise on $\cD$,
and $L^{\beta}$ is a fractional power of
a second-order differential operator $L$
which determines the covariance \kk{structure} of $u$.
Our \kk{rational approximations are} 
based on two components:
\begin{enumerate*}[label=(\roman*)]
\item a finite element method (FEM)
    with continuous and piecewise polynomial basis functions $\{\varphi_j\}_{j=1}^n$, and
\item a rational approximation of the function $x^{-\beta}$.
\end{enumerate*}
We explain how to perform these two steps in practice
in order to explicitly compute the matrices $\mv{P}$ and $\mv{Q}$.
Furthermore, we derive an upper bound
for the strong mean-square error of the rational approximation.
This bound provides an explicit rate of convergence in terms
of the mesh size of the finite element discretization,
which facilitates tuning the approximation
without empirical tests
for each specific model.

Examples of random fields
which can be expressed as solutions
to SPDEs of the form \eqref{e:Lbeta}
include approximations of Gaussian Mat\'ern fields \citep{matern60}.
Specifically, if $\cD:=\bbR^d$ 
a zero-mean Gaussian Mat\'ern field can be viewed
as a solution $u$ to
\begin{equation}\label{e:statmodel}
        (\kappa^2 - \Delta)^{\beta} \, (\tau u) = \white,
\end{equation}
where $\Delta$ denotes the Laplacian \citep{whittle63}.
The constant parameters $\kappa, \tau > 0$
determine the practical correlation range
and the variance of $u$.
The exponent~$\beta$ defines the smoothness parameter
$\nu$ of the Mat\'ern covariance function
via the relation $2\beta= \nu + d/2$
and, thus, the differentiability of the field. 
\kk{For applications, variance, range and differentiability 
typically are the most important properties 
of a Gaussian field.
For this reason, the Mat\'ern model is 
highly popular in spatial statistics 
and has become the} \db{standard choice 
for Gaussian process priors 
in machine learning \citep{Rasmussen2006}.} 
\kk{Since \eqref{e:statmodel} 
is a special case of \eqref{e:Lbeta} 
we believe that 
the outcomes of this work 
will be of great relevance
for many statistical applications, 
see also \S\ref{sec:application}}.

In contrast to covariance-based models,
\db{the SPDE} \kk{approach additionally} has the advantage
that it allows for a number of generalizations
of stationary Mat\'ern fields including
\begin{enumerate*}[label=(\roman*)]
    \item non-stationary fields 
        \kk{generated by non-stationary}  
        differential operators 
        \citep[e.g.,][]{fuglstad2015non-stationary},
    \item fields on more general domains 
    	such as the sphere 
    	\citep[e.g.,][]{lindgren11}, and
    \item non-Gaussian \kk{models} 
    	\citep[e.g.,][]{wallin15}.
\end{enumerate*}

\cite{lindgren11} showed
that, if $2\beta\in\bbN$, one can construct accurate approximations of the form \eqref{e:basisexp}
for Gaussian Mat\'ern fields 
on bounded domains $\cD\subsetneq\bbR^d$,
such that $\mv{\Sigma}_{\mv{u}}^{-1}$ is sparse.
To this end, \eqref{e:statmodel} is considered on $\cD$ 
and the differential operator $\kappa^2-\Delta$ is augmented
with appropriate boundary conditions.
The resulting \kk{SPDE} is then solved
approximately by means of a FEM. 
\kk{Due to} 
\db{the implementation in the R-INLA software, 
this \kk{approach} has become \kk{widely used}, 
see \citep{bakka2018} for a \kk{comprehensive} 
list of recent applications}. 

%
However, the restriction $2\beta\in\bbN$ implies 
a significant limitation for the
flexibility of the method.
In particular, it is therefore not
directly applicable to the important special case
of exponential covariance ($\nu=1/2$) on $\bbR^2$,
where $\beta=3/4$.
In addition, restricting the value of $\beta$
complicates \kk{estimating} 
the smoothness of the process from data. \kk{In fact, $\beta$ typically is fixed
when the method is used in practice, 
since identifying the value of $2\beta \in \bbN$ 
with the highest likelihood 
requires a separate estimation of
all the other parameters in the model 
for each value of $\beta$}. 
A common justification for fixing $\beta$ is to argue
that it is not practicable to estimate 
the smoothness of a random field from data.
\db{However, there are certainly applications for which 
it is feasible to estimate the smoothness. We provide an example of this in~\S\ref{sec:application}}.
\db{Furthermore, having the correct smoothness of the model is particularly important for interpolation, and the fact that the Mat\'ern model allows for estimating the smoothness from data was the main reason for why
\cite{stein99} recommended the model.}

%

The rational SPDE approach 
\kk{presented in this work}
facilitates an estimation of $\beta$
from data by providing an approximation of $u$
which is computable for all fractional powers \kk{$\beta > d/4$ (i.e., $\nu>0$), 
where $d\in\bbN$ is the dimension of the spatial 
domain $\cD\subset\bbR^d$}.
It thus enables to include this parameter in
likelihood-based (or Bayesian) parameter estimation 
for both stationary and non-stationary models.
Although the SPDE approach has been considered
in the non-fractional case also for non-stationary models,
\cite{lindgren11} showed convergence of the approximation
only for the stationary model \eqref{e:statmodel}. 
Our analysis \kk{in \S\ref{sec:rational}} closes this gap 
since we consider the general model \eqref{e:Lbeta}
which covers the non-stationary case 
\db{and several other previously proposed 
\kk{generalizations} of the Mat\'ern model}. 

The structure of this article is  as follows:
We briefly review existing methods
for the SPDE approach in the fractional case in \S\ref{sec:SPDE}.
In \S\ref{sec:rational} the rational SPDE approximation
is introduced and a result on its strong convergence is stated.
The procedure of \kk{applying 
the rational SPDE approach to} statistical inference
is addressed in \S\ref{sec:comp}.
\S\ref{sec:numerics} contains numerical experiments
which illustrate the accuracy of the proposed method.
\db{The identifiability of the parameters in the Mat\'ern SPDE model \eqref{e:statmodel} is discussed in \S\ref{sec:inference}, where we derive necessary and sufficient conditions for equivalence of the induced Gaussian measures.}
In \S\ref{sec:application} we present an application
to climate data,
where we consider fractional and non-fractional models
for both stationary and non-stationary covariances.
We conclude with a discussion in \S\ref{sec:discussion}.
Finally, \db{the supplementary} \kk{material contains} 
four appendices providing details
about \begin{enumerate*}[label=(\Alph*)]
\item the finite element discretization,
\item the convergence analysis, 
\item a comparison with the quadrature method by \cite{bolin2017numerical}, and
\item the equivalence of Gaussian measures. 
\end{enumerate*}
The method developed in this work has been implemented
in the R \citep{Rlanguage} package rSPDE \citep{rSPDEpackage}.

\section{The SPDE approach in the fractional case until now}\label{sec:SPDE}

A reason for why the approach by \cite{lindgren11}
only works for integer values of $2\beta$
is given by \cite{rozanov1977markov},
who showed that a Gaussian random field
on $\bbR^d$ is Markov
if and only if its spectral density
can be written as the reciprocal of a polynomial,
$\widetilde{S}(\kvec) = (2\pi)^{-d}(\sum_{j=0}^m b_j \|\kvec\|^{2j})^{-1}$.
Since the spectrum of a Gaussian Mat\'ern field is
\begin{equation}\label{eq:maternSpec}
    S(\kvec) = \frac{1}{(2\pi)^d}\frac{1}{(\kappa^2 + \|\kvec\|^2)^{2\beta}},
    \qquad
    \kvec\in\bbR^d,
\end{equation}
the precision matrix $\mv{Q}$ will therefore
not be sparse unless $2\beta\in\mathbb{N}$. For $2\beta\notin\mathbb{N}$,
\cite{lindgren11} suggested to compute a Markov approximation
by choosing $m = \ceil{2\beta}$ and selecting the coefficients
$\mv{b} = (b_1, \ldots, b_m)^{\trsp}$
so that the deviation between the spectral densities
$\int_{\bbR^d} w(\kvec)(S(\kvec) - \widetilde{S}(\kvec))^2 \,\rd\kvec$
is minimized.
For this measure of deviation, $w$ is some suitable weight function
which should be chosen to get a good approximation
of the covariance function.
For the method to be useful in practice,
the coefficients $b_j$ should be given explicitly in terms
of the parameters~$\kappa$ and $\nu$.
Because of this, \cite{lindgren11}
proposed a weight function that enables an analytical evaluation of the integral,
\begin{align*}
    \int_{\kappa^2}^{\infty}\Bigl[ z^{2\beta} - \sum_{j=0}^m b_j (z-\kappa^2)^{j} \Bigr]^2 z^{-2m -1 - \theta} \, \rd z,
\end{align*}
where $\theta > 0$ is a tuning parameter.
By differentiating this integral with respect to the parameters and
setting the differentials equal to zero,
a system of linear equations is obtained,
which can be solved to find the coefficients $\mv{b}$.
The resulting approximation depends strongly on $\theta$,
and one could use numerical optimization to find a good value of $\theta$
for a specific value of $\beta$, or use the choice $\theta = 2\beta - \floor{2\beta}$,
which approximately minimizes the maximal distance between the covariance functions \citep{lindgren11}.
This method was used for the comparison in \citep{heaton2017methods}, and we will use it
as a baseline method when analyzing the accuracy of the rational SPDE approximations in later sections.

Another Markov approximation based on the spectral density
was proposed by \cite{roininen2014sparse}.
These Markov approximations may be sufficient in certain applications;
however, 
any approach based on the spectral density
or the covariance function is difficult to generalize
to models on more general domains than~$\mathbb{R}^d$,
non-stationary models, or non-Gaussian models.
Thus, such methods cannot be used if the full potential
of the SPDE approach should be kept for fractional values of $\beta$.


There is a rich literature on methods for solving deterministic fractional PDEs
\citep[e.g.,][]{bonito2015, gavrilyuk2004, jin2015, nochetto2015},
and some of the methods that have been proposed could be used
to compute approximations of the solution to the SPDE~\eqref{e:statmodel}.
However, any deterministic problem becomes
more sophisticated when randomness is included.
Even methods developed specifically for sampling solutions to
SPDEs like~\eqref{e:statmodel} may be 
\db{difficult to use directly for statistical applications, \kk{when}  
likelihood evaluations, spatial predictions \kk{or} posterior sampling
are needed}.
\kk{For instance, it has been unclear  
if the sampling approach by \cite{bolin2017numerical}, which is
based on a quadrature approximation
for an integral representation of the
fractional inverse $L^{-\beta}$, 
could be used for statistical inference.}
\db{In Appendix~\ref{subsec:rat-comparequad} we show 
that it can be viewed as a 
(computationally less efficient) version 
of the rational SPDE approximations developed in this work. 
\kk{Consequently}, the results in \S\ref{sec:comp} 
on how to use the rational SPDE approach 
for inference \kk{apply} also to that method. 
In \S\ref{subsec:numerics-matern} 
we compare the performance of 
the two methods in practice within the scope of a numerical experiment.}


\section{Rational approximations for fractional SPDEs}\label{sec:rational}

In this section we propose an explicit scheme for
approximating solutions to a class of SPDEs including~\eqref{e:statmodel}.
Specifically, in 
\kk{\S\ref{subsec:fractional}--\S\ref{subsec:discrete}} 
we introduce
the fractional order equation of interest
as well as its finite element discretization.
In \S\ref{subsec:rat-approx} 
we propose a non-fractional equation, whose solution
after specification of certain coefficients
approximates the random field of interest.
For this approximation, we provide 
\kk{a rigorous error bound}
in \S\ref{subsec:error}. 
Finally, in \S\ref{subsec:rat-coeff} we address the computation of the
coefficients in the rational approximation.

\subsection{The fractional order equation}\label{subsec:fractional}

\kk{With the objective of allowing} 
for more general Gaussian random fields 
than the Mat\'ern \kk{class},
we consider the fractional order equation \eqref{e:Lbeta}, 
where $\cD \subset \bbR^d$, $d\in\{1,2,3\}$,
is \kk{an open}, bounded, convex \kk{polytope}, 
\kk{with closure $\overline{\cD}$}, and 
$\white$ is Gaussian white noise in \kk{$L_2(\cD)$.
Here and below, $L_2(\cD)$ is the Lebesgue space
of square-integrable real-valued functions,
which is equipped with the inner product
$\scalar{w,v}{L_2(\cD)} := \int_{\cD} w(\mv{s}) v(\mv{s}) \, \mathrm{d} \mv{s}$.
The Sobolev space of order $k\in\bbN$ is denoted by
$H^k(\cD) := \left\{ w \in L_2(\cD) : 
D^{\gamma} w \in L_2(\cD) \ \forall \, |\gamma|\leq k \right\}$
and $H^1_0(\cD)$ is the subspace of $H^1(\cD)$ containing
functions with vanishing trace.

We assume that the operator 
$L\from\mathscr{D}(L)\subset L_2(\cD) \to L_2(\cD)$
is a linear second-order 
differential operator in divergence form,
\begin{equation}\label{e:L-div} 
	L u = - \nabla \cdot(\mv{H} \nabla u) + \kappa^2 u,
\end{equation}
whose domain of definition $\mathscr{D}(L)$ 
depends on the choice of boundary conditions 
on $\partial\cD$. 
Specifically, we impose homogeneous Dirichlet 
or Neumann boundary conditions 
and set $V=H^1_0(\cD)$ or $V=H^1(\cD)$, 
respectively. 
Furthermore, we let 
the functions $\mv{H}$ and $\kappa$ 
in~\eqref{e:L-div} 
satisfy the following assumptions: 
\begin{enumerate}[label=\Roman*.] 
	\item\label{ass:coeff-H} 
	$\mv{H}\from\overline{\cD}\to\bbR^{d\times d}$ 
	is symmetric, 
	Lipschitz continuous on the closure $\overline{\cD}$, i.e.,
	\[
	\exists C_{\operatorname{Lip}} > 0 : 
	\quad 
	| H_{ij}(\mv{s}) - H_{ij}(\mv{s'}) |
	\leq 
	C_{\operatorname{Lip}} \norm{\mv{s} - \mv{s'}}{} 
	\qquad 
	\forall \mv{s}, \mv{s'}\in \overline{\cD}, 
	\quad 
	\forall i,j\in\{1,\ldots,d\}, 
	\] 
	and
	uniformly positive definite, i.e.,
	\[ 
	\exists C_0 > 0 :
	\quad
	\operatorname{ess} \inf_{\mv{s}\in\cD} \xivec^\trsp \mv{H}(\mv{s}) \xivec 
	\geq C_0 \|\xivec \|^2
	\qquad
	\forall \xivec \in \bbR^d;  
	\]  
	\item\label{ass:coeff-kappa}  
	$\kappa \from \cD \to \bbR$ 
	is essentially bounded, 
	$\kappa\in L_{\infty}(\cD)$. Furthermore, if Neumann boundary conditions are imposed, then 
	$
	\operatorname{ess} \inf_{\mv{s}\in\cD} \kappa(\mv{s}) \geq \kappa_0 > 0
	$  holds.
\end{enumerate}
If \ref{ass:coeff-H}--\ref{ass:coeff-kappa} 
are satisfied, the differential operator 
$L$ in~\eqref{e:L-div} induces a symmetric, 
continuous and coercive   
bilinear form $a_L$ on $V$, 
\begin{equation}\label{e:a-L} 
	a_L \from V\times V \to \bbR, 
	\qquad 
	a_L (u,v) 
	:= 
	(\mv{H}\nabla u, \nabla v)_{L_2(\cD)}
	+ 
	(\kappa^2 u, v)_{L_2(\cD)}, 
\end{equation} 
and its domain  
is given by 
$\mathscr{D}(L) = 
H^2(\cD)\cap V$}. 
%
\kk{Furthermore, 
Weyl's law 
\citep[see, e.g.,][Thm.~6.3.1]{Davies1995} 
shows that the eigenvalues $\{\lambda_j\}_{j\in\bbN}$ 
of the elliptic differential operator~$L$ 
in~\eqref{e:L-div}, 
in nondecreasing order, 
satisfy the spectral asymptotics 
\begin{align}\label{e:lambdaj}
\lambda_j \eqsim j^{2/d}
\qquad
\text{as }j\to\infty.
\end{align}
Thus}, \db{existence and uniqueness of the solution $u$ 
to \eqref{e:Lbeta} 
\kk{readily} follow from Lemma~2.1 and Proposition~2.3 
of \cite{bolin2017numerical}. We formulate this as a proposition.}
\begin{proposition}\label{prop:regularity}
\db{Let $L$ be given by \eqref{e:L-div} where $\mv{H}$ and $\kappa$ 
\kk{satisfy} the assumptions \ref{ass:coeff-H}--\ref{ass:coeff-kappa} 
above and assume $\beta > d/4$. 
Then \eqref{e:Lbeta} has a \kk{unique solution $u$ 
in $L_2(\Omega;L_2(\cD))$}.}
\end{proposition}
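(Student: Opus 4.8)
The plan is to construct the solution via a spectral (eigenfunction) expansion of the operator $L$ and to verify that the resulting series converges in $L_2(\Omega; L_2(\cD))$ precisely under the condition $\beta > d/4$. Since $L$ is self-adjoint, positive definite (by coercivity of $a_L$), and has compact inverse (by the Rellich--Kondrachov embedding, since $\mathscr{D}(L) = H^2(\cD)\cap V$ embeds compactly into $L_2(\cD)$), the spectral theorem provides an orthonormal basis $\{e_j\}_{j\in\bbN}$ of $L_2(\cD)$ consisting of eigenfunctions, with eigenvalues $0 < \lambda_1 \leq \lambda_2 \leq \cdots \to \infty$. The fractional power $L^\beta$ is then defined in the usual way by $L^\beta v = \sum_{j} \lambda_j^\beta (v, e_j)_{L_2(\cD)}\, e_j$, with domain consisting of those $v$ for which this series converges, and $L^{-\beta}$ acts by the reciprocal multipliers $\lambda_j^{-\beta}$.

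Next I would represent the white noise $\white$ by its Karhunen--Lo\`eve-type expansion in this basis, $\white = \sum_{j} \xi_j\, e_j$, where $\{\xi_j\}_{j\in\bbN}$ are i.i.d.\ standard Gaussian random variables; this is the standard representation of Gaussian white noise on $L_2(\cD)$. The candidate solution is obtained by formally applying $L^{-\beta}$, giving
\begin{equation*}
u = L^{-\beta}\white = \sum_{j\in\bbN} \lambda_j^{-\beta}\, \xi_j\, e_j .
\end{equation*}
To establish that this defines an element of $L_2(\Omega; L_2(\cD))$, I would compute the second moment of its $L_2(\cD)$-norm:
\begin{equation*}
\bbE\bigl[ \norm{u}{L_2(\cD)}^2 \bigr]
= \sum_{j\in\bbN} \lambda_j^{-2\beta}\, \bbE[\xi_j^2]
= \sum_{j\in\bbN} \lambda_j^{-2\beta}.
\end{equation*}
The key analytic step is then to verify that this sum is finite. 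Using the Weyl asymptotics~\eqref{e:lambdaj}, namely $\lambda_j \eqsim j^{2/d}$, the summand behaves like $j^{-4\beta/d}$, so the series converges by comparison with a $p$-series if and only if $4\beta/d > 1$, i.e.\ $\beta > d/4$. This is exactly where the hypothesis on $\beta$ enters and is the crux of the whole argument.

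Finally, uniqueness follows from the injectivity of $L^\beta$: since all $\lambda_j > 0$, the operator $L^{-\beta}$ is well defined and $L^\beta u = \white$ pins down the coefficients $\lambda_j^{-\beta}\xi_j$ uniquely, so any two solutions agree coefficientwise and hence coincide in $L_2(\Omega; L_2(\cD))$. In practice, since Proposition~2.3 and Lemma~2.1 of \cite{bolin2017numerical} already furnish existence, uniqueness and the requisite regularity once the spectral asymptotics~\eqref{e:lambdaj} are in hand, I would simply cite those results; the only nontrivial verification is that the present operator $L$ from~\eqref{e:L-div} under assumptions~\ref{ass:coeff-H}--\ref{ass:coeff-kappa} satisfies their hypotheses, in particular that it is self-adjoint with domain $H^2(\cD)\cap V$ and obeys the Weyl law. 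I expect the main (and essentially only) obstacle to be bookkeeping: confirming that the abstract framework of \cite{bolin2017numerical} applies verbatim to our divergence-form operator with Lipschitz coefficients and the chosen boundary conditions, after which the summability criterion $\sum_j \lambda_j^{-2\beta} < \infty \Longleftrightarrow \beta > d/4$ closes the proof.
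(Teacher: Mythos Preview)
Your proposal is correct and matches the paper's approach exactly: the paper does not give a self-contained argument but simply notes that Weyl's law~\eqref{e:lambdaj} holds for $L$ and then cites Proposition~2.3 and Lemma~2.1 of \cite{bolin2017numerical}, which is precisely what you arrive at in your final paragraph. Your spectral-expansion computation is in fact the content of those cited results, so you have both reproduced the paper's citation-based proof and unpacked what lies behind it.
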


\kk{The assumptions \ref{ass:coeff-H}--\ref{ass:coeff-kappa} 
on the differential operator $L$ 
are satisfied, e.g., by the 
Mat\'ern operator $L = \kappa^2 - \Delta$, 
in which case the condition $\beta>d/4$ 
on the fractional exponent in~\eqref{e:Lbeta}
corresponds to a positive smoothness parameter $\nu$, 
i.e., to a non-degenerate field. 
Moreover, the equation~\eqref{e:Lbeta} 
as considered in our work 
includes several previously proposed non-fractional 
non-stationary models as special cases, 
such as} \db{the non-stationary Mat\'ern models 
by \cite{lindgren11}, 
the models with locally varying anisotropy 
by \cite{fuglstad2015non-stationary}, 
and the barrier models by \cite{bakka2019}. 
Thus, Proposition~\ref{prop:regularity} shows existence 
and uniqueness of the fractional versions of all these models, 
which can be treated \kk{in practice by} 
using the results of the following sections}.

\subsection{\kk{The discrete model}}\label{subsec:discrete}

In order to discretize the problem,
we assume that 
$V_h \subset V$ is a finite element space with continuous
piecewise \kk{linear} basis functions $\{\varphi_{j}\}_{j=1}^{n_h}$
defined with respect to
a triangulation $\cT_h$ of the domain $\overline{\cD}$
indexed by the mesh \kk{width}
$h := \max_{T\in\cT_h} h_T$, where $h_T := \operatorname{diam}(T)$ 
\db{is the diameter of the element $T\in\cT_h$}.
Furthermore, the family $(\cT_h)_{h\in(0,1)}$ of triangulations
inducing the finite-dimensional subspaces $(V_h)_{h\in(0,1)}$ of $V$
is supposed to be quasi-uniform, i.e., 
there exist constants $C_1, C_2 > 0$ such that
$\rho_T \geq C_1 h_T$ and
$h_T \geq C_2 h$ for all
$T\in\cT_h$ and $h \in (0,1)$.
Here,
\kk{$\rho_T>0$} is radius of largest ball inscribed in $T\in\cT_h$.

The \kk{discrete operator $L_h\from V_h \to V_h$ is defined
in terms of the bilinear form $a_L$ in~\eqref{e:a-L} via 
the relation 
$\scalar{L_h \phi_h, \psi_h}{L_2(\cD)}
=
a_L( \phi_h, \psi_h )$ 
which holds for all 
$\phi_h,\psi_h \in V_h$}.
We then consider the following SPDE
on the \kk{finite-dimensional} state space $V_h$,
\begin{align}\label{e:uh}
    L_h^\beta u_h = \white_h
    \quad
    \text{in }\cD,
\end{align}
where $\white_h$ is Gaussian white noise in
$V_h$, i.e.,
$\white_h = \sum_{j=1}^{n_h} \xi_j e_{j,h}$
for a basis $\{e_{j,h}\}_{j=1}^{n_h}$ of $V_h$ which
is orthonormal in $L_2(\cD)$ and 
$\xi_j \sim \pN(0,1)$ i.i.d.~for $j=1,\ldots,n_h$.

\kk{We note that the 
assumptions~\ref{ass:coeff-H}--\ref{ass:coeff-kappa} 
from \S\ref{subsec:fractional} 
on the functions $\mv{H}$ and $\kappa$ 
combined with the convexity of $\cD$ 
imply that the operator 
$L$ in~\eqref{e:L-div} is 
$H^2(\cD)$-regular, i.e., 
if $f\in L_2(\cD)$, then the weak solution 
$u\in V$ to $Lu=f$ satisfies $u\in H^2(\cD) \cap V$, 
see, e.g.,~\cite[Thm.~3.2.1.2]{Grisvard:2011} 
for the case of Dirichlet boundary conditions.
By combining this observation 
with the spectral asymptotics~\eqref{e:lambdaj} 
we see that the assumptions in Lemmata 
3.1 and 3.2 of \cite{bolin2017numerical} are satisfied
(since then, in their notation, $r=s=q=2$ and $\alpha=2/d$) 
and we obtain an 
error estimate for the finite element approximation
$u_h = L_h^{-\beta} \white_h$
in~\eqref{e:uh} for all $\beta\in(d/4,1)$.
Furthermore, since their derivation
requires only that $\beta>d/4$,
we can formulate this result for all such  
values of $\beta$ in the following proposition.}

\begin{proposition}\label{prop:uh}
\kk{Suppose that $\beta > d/4$ and 
that $L$ is given by \eqref{e:L-div} 
where $\mv{H}$ and $\kappa$ satisfy the 
assumptions~\ref{ass:coeff-H}--\ref{ass:coeff-kappa} 
from \S\ref{subsec:fractional}. 
Let $u$, $u_h$ be the solutions
to~\eqref{e:Lbeta} and~\eqref{e:uh}, respectively.
Then, there exists a constant $C>0$ such that, 
for sufficiently small $h$,} 
\begin{align*}
    \kk{\norm{u - u_h}{L_2(\Omega;L_2(\cD))} 
    \leq C h^{\min\{ 2\beta-d/2, \,2 \}}.}
\end{align*}
\end{proposition}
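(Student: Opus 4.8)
The plan is to recast the mean-square error as the Hilbert--Schmidt norm of a deterministic error operator, and then to reduce the resulting bound to the finite element estimates of \cite{bolin2017numerical}. First I would fix a coupling between the two noises by setting $\white_h = P_h\white$, where $P_h\from L_2(\cD)\to V_h$ denotes the $L_2(\cD)$-orthogonal projection; since $\{e_{j,h}\}_{j=1}^{n_h}$ is orthonormal in $L_2(\cD)$, the coefficients $\scalar{\white,e_{j,h}}{L_2(\cD)}$ are i.i.d.\ standard Gaussian, so that $P_h\white$ indeed reproduces the noise $\white_h$ of~\eqref{e:uh}. On this common probability space both $u=L^{-\beta}\white$ and $u_h=L_h^{-\beta}P_h\white$ are centred Gaussian, and, since the covariance operator of white noise is the identity on $L_2(\cD)$,
\begin{equation*}
    \norm{u-u_h}{L_2(\Omega;L_2(\cD))}^2
    = \bbE\norm{\bigl(L^{-\beta}-L_h^{-\beta}P_h\bigr)\white}{L_2(\cD)}^2
    = \norm{L^{-\beta}-L_h^{-\beta}P_h}{\operatorname{HS}}^2 .
\end{equation*}

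Two observations then reduce everything to a deterministic problem. On the one hand, in the eigenbasis $\{(\lambda_j,e_j)\}_{j\in\bbN}$ of $L$ one has $\norm{L^{-\beta}}{\operatorname{HS}}^2=\sum_{j\in\bbN}\lambda_j^{-2\beta}$, which by Weyl's law~\eqref{e:lambdaj} is comparable to $\sum_j j^{-4\beta/d}$ and hence finite exactly when $\beta>d/4$; as $L_h^{-\beta}P_h$ has finite rank, the difference above is genuinely Hilbert--Schmidt, consistently with Proposition~\ref{prop:regularity}. On the other hand, the remaining bound $\norm{L^{-\beta}-L_h^{-\beta}P_h}{\operatorname{HS}}\leq Ch^{\min\{2\beta-d/2,\,2\}}$ is the content of Lemmata~3.1 and~3.2 of \cite{bolin2017numerical} once their hypotheses are verified, and that verification is the heart of the matter: assumptions~\ref{ass:coeff-H}--\ref{ass:coeff-kappa} together with the convexity of $\cD$ give $H^2(\cD)$-regularity of $L$ (by \cite[Thm.~3.2.1.2]{Grisvard:2011} in the Dirichlet case and its Neumann analogue), so that the piecewise linear finite element method has $L_2(\cD)$-convergence order $2$ for $L_2(\cD)$-data, while~\eqref{e:lambdaj} fixes the spectral exponent. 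In the notation of \cite{bolin2017numerical} these facts amount to $r=s=q=2$ and $\alpha=2/d$, and substituting them into their estimates returns the claimed exponent.

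The mechanism behind those lemmata, which I would reproduce if a self-contained argument were wanted, is to represent the fractional power by a Dunford--Taylor (Balakrishnan) integral, $L^{-\beta}=\tfrac{\sin(\pi\beta)}{\pi}\int_0^\infty t^{-\beta}(t+L)^{-1}\,\rd t$ for $\beta\in(0,1)$ (splitting off integer powers of $L^{-1}$ when $\beta\geq1$), so that $L^{-\beta}-L_h^{-\beta}P_h$ becomes an integral of shifted resolvent finite element errors $(t+L)^{-1}-(t+L_h)^{-1}P_h$. The main obstacle is the roughness of white noise. The order is capped at $2$ by the piecewise linear elements and the $H^2$-regularity, whereas the limited regularity of $u=L^{-\beta}\white$ (morally $u\in\Hdot{2\beta-d/2-\varepsilon}$) caps the attainable rate at $2\beta-d/2$, and the minimum of the two produces the stated exponent. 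The delicate point is that the error must be summed over all eigenmodes in the Hilbert--Schmidt norm, so one needs resolvent estimates uniform in the shift $t$ and per-mode bounds whose eigenvalue weights are balanced against the borderline-convergent series $\sum_j\lambda_j^{-2\beta}$, ensuring no logarithmic or $\varepsilon$-loss in the rough regime $2\beta-d/2<2$. This balancing is exactly what Lemmata~3.1 and~3.2 of \cite{bolin2017numerical} carry out, so that here the work reduces to the hypothesis check above.
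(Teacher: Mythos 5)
Your proposal follows essentially the same route as the paper: the result is obtained by verifying the hypotheses of Lemmata~3.1 and~3.2 of \cite{bolin2017numerical} --- namely $H^2(\cD)$-regularity of $L$ from assumptions~\ref{ass:coeff-H}--\ref{ass:coeff-kappa} and convexity of $\cD$, and the spectral exponent $\alpha = 2/d$ from Weyl's law~\eqref{e:lambdaj}, so that $r=s=q=2$ in their notation --- and then invoking those lemmata for all $\beta > d/4$. Your additional framing via the Hilbert--Schmidt norm of $L^{-\beta}-L_h^{-\beta}P_h$ and the sketch of the Dunford--Taylor mechanism is accurate but not needed beyond what the paper itself does.
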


\subsection{The rational approximation}\label{subsec:rat-approx}

Proposition~\ref{prop:uh} 
shows that the mean-square error between $u$ and $u_h$ in $L_2(\cD)$
converges to zero as $h\to 0$.  
It remains to describe
how an approximation of the random field $u_h$
with values in the finite-dimensional state space $V_h$
can be constructed.

For $\beta\in\bbN$ one can
use, e.g., the iterated finite element
method presented in Appendix~\ref{app:iter-fem}
to compute $u_h$ in~\eqref{e:uh} directly.
In the following, we construct approximations of
$u_h$ if $\beta\not\in\bbN$ is a fractional exponent.
For this purpose, we aim at finding a
non-fractional equation
\begin{align}\label{e:uhr}
    \lop u_{h,m}^R = \rop \white_h
    \quad
    \text{in }\cD,
\end{align}
such that $u_{h,m}^R$ is a
good approximation of $u_h$, and where
the operator $P_{j,h} := p_j(L_h)$
is defined in terms of a polynomial
$p_{j}$ of degree \kk{$m_j\in\bbN_0$, for $j\in\{\ell,r\}$}.
Since the so-defined operators $\lop$, $\rop$
commute, this will lead to a nested
SPDE model of the form
\begin{equation}\label{e:nested-discrete}
\begin{split}
    \lop x_{h,m} &= \white_h \hspace*{1.15cm} \text{in }\cD, \\
    u^R_{h,m}    &= \rop x_{h,m} \quad \text{in }\cD,
\end{split}
\end{equation}
which facilitates efficient computations,
see \S\ref{sec:comp} and Appendix~\ref{app:iter-fem}.

Comparing the initial equation~\eqref{e:Lbeta} with
\begin{align}\label{e:ur}
    \lopex u_m^R = \ropex \white
    \quad
    \text{in }\cD,
\end{align}
where $P_j := p_j(L)$, $j\in\{\ell,r\}$,
motivates the choice $m_\ell - m_r \approx \beta$
in order to obtain a similar smoothness of
$u_m^R = (\ropex^{-1} \lopex)^{-1} \white$
and $u = L^{-\beta} \white$ in~\eqref{e:Lbeta}.
In practice, we \kk{first} choose a degree $m\in\bbN$ 
and \db{then set}
\begin{align}\label{e:betac}
    m_r := m
    \qquad
    \text{and}
    \qquad
    m_\ell := m + m_{\beta},
    \qquad
    \text{where}
    \qquad
    m_{\beta} := \max\{ 1, \floor{\beta} \}.
\end{align}
In this case, the solution $u_{m}^R$ of~\eqref{e:ur}
has the same smoothness as the solution
$v$ of the non-fractional equation
$L^{\floor{\beta}} v =  \white$, if $\beta\geq 1$, and as $v$ in
$L v =  \white$, if $\beta<1$.
Note that, for fixed $h$, the degree $m$ controls
the accuracy of the approximation~$u_{h,m}^R$.

We now turn to the problem of defining
the non-fractional operators $\lop$ and~$\rop$
in~\eqref{e:uhr}.
In order to compute $u_h$ in~\eqref{e:uh} directly,
one would have to apply the discrete fractional inverse
$L_h^{-\beta}$ to the noise term $\white_h$ on the right-hand side.
Therefore, a first idea would be to approximate
the function $x^{-\beta}$ on the spectrum of $L_h$
by a rational function $\widetilde{r}$ and
to use $\widetilde{r}(L_h) \white_h$
as an approximation of $u_h$.
This is, in essence, the approach
proposed by \cite{harizanov2016optimal}
to find optimal solvers for the problem $\mv{L}^{\beta}\mv{x} = \mv{f}$,
where $\mv{L}$ is a sparse symmetric positive definite matrix.
However, the spectra of $L$ and of $L_h$ as $h\to 0$
\kk{(considered as operators on $L_2(\cD)$)} 
are unbounded and, thus, it would be necessary
to normalize the spectrum of $L_h$ for every $h$,
since it is not feasible to construct the rational approximation $\widetilde{r}$
on an unbounded interval.
%
%
We aim at an approximation
$L_h^{-\beta} \approx p_\ell(L_h)^{-1} p_r(L_h)$,
where in practice the \kk{choice} of $p_\ell$ and $p_r$
can be made independent of $L_h$ and $h$. 
Thus, we pursue another idea.

In contrast to the 
\kk{differential operator $L$ in~\eqref{e:L-div}, 
its inverse $L^{-1}\from L_2(\cD) \to L_2(\cD)$ 
is compact and, thus}, the spectra
of $L^{-1}$ and of $L_{h}^{-1}$ are bounded subsets of
the intervals $J:=\bigl[ 0,\lambda_{1}^{-1} \bigr]$
and
$J_h := \bigl[ \lambda_{n_h,h}^{-1}, \lambda_{1,h}^{-1} \bigr] \subset J$,
respectively, 
where $\lambda_{1,h}, \lambda_{n_h,h}  > 0$
are the smallest and the largest eigenvalue of $L_h$.
\kk{This motivates 
a rational approximation $r$
of the function $f(x) := x^\beta$
on~$J$ and to deduce 
the non-fractional equation~\eqref{e:uhr}
from $u_{h,m}^R = r(L_h^{-1}) \white_h$}.


\kk{In order to achieve our envisaged choice \eqref{e:betac} 
of different polynomial degrees $m_\ell$ and~$m_r$, 
%
we decompose $f$ via $f(x) = \hat{f}(x) x^{m_\beta}$,
where $\hat{f}(x):=x^{\beta-m_\beta}$.  
We approximate $\hat{f}\approx\hat{r} := \frac{q_1}{q_2}$ on $J_h$,
where $q_1(x) := \sum_{i=0}^m c_i x^i$ and
$q_2(x) := \sum_{j=0}^{m+1} b_j x^j$ are polynomials of degree $m$ and $m+1$, respectively, 
and use 
$r(x) := \hat{r}(x) x^{m_\beta}$ as an approximation for $f$.
This construction leads
(after expanding the fraction with $x^m$)
to a rational approximation
$\frac{p_r}{p_\ell}$ of $x^{-\beta}$, 
\begin{align}\label{e:xbeta}
    x^{-\beta} =  f(x^{-1})
        \approx \hat{r}(x^{-1}) x^{-m_\beta}
        = \frac{q_1(x^{-1})}{q_2(x^{-1}) x^{m_\beta}}
        = \frac{\sum_{i=0}^{m} c_i x^{m-i}}{\sum_{j=0}^{m+1} b_j x^{m + m_\beta -j}},
\end{align}
where the polynomials 
$p_r(x) := \sum_{i=0}^{m} c_i x^{m-i}$ 
and 
$p_\ell(x) := \sum_{j=0}^{m+1} b_j x^{m + m_\beta -j}$ 
are of 
degree $m$  
and $m+m_\beta$, respectively, 
i.e., \eqref{e:betac} is satisfied}. 


The operators $\lop$, $\rop$ in \eqref{e:uhr} 
are defined accordingly,
\begin{align}\label{e:loprop}
    \lop := p_\ell(L_h) = \sum_{j=0}^{m+1} b_j L_h^{m+m_\beta-j},
    \qquad
    \rop := p_r(L_h) = \sum_{i=0}^{m} c_i L_h^{m-i}.
\end{align}
Their continuous counterparts in~\eqref{e:ur} are
$\lopex := p_\ell(L)$ and $\ropex := p_r(L)$.
We note that, for \eqref{e:betac} to hold,
any choice $m_2 \in \{0,1,\ldots, m+m_\beta\}$
would have been permissible
for the polynomial degree of $q_2$,
if $m$ is the degree of $q_1$.
The reason for setting $m_2 = m+1$ is
that this is the maximal choice which
is universally applicable
for all values of 
\kk{$\beta > d/4$}. 

In the following we refer to $u_{h,m}^R$ in \eqref{e:uhr}
with $\lop$, $\rop$ defined by \eqref{e:loprop}
as the rational \kk{SPDE} approximation of degree $m$.
We emphasize that this approximation
relies (besides the finite element discretization)
only on the rational approximation of the function~$\hat{f}$.
In particular, no information about the operator~$L$
except for a lower bound of the eigenvalues
is needed.
In the Mat\'ern case, we have $L = \kappa^2 - \Delta$
(with certain boundary conditions)
and an obvious lower bound of the eigenvalues 
is therefore given by $\kappa^2$.

\subsection{An error bound for the 
	rational approximation}\label{subsec:error}

In this \kk{subsection} we justify the 
\kk{approach proposed  
in \S\ref{subsec:discrete}--\S\ref{subsec:rat-approx}}
by providing an upper bound for the strong mean-square error
$\norm{u - u_{h,m}^R}{L_2(\Omega; L_2(\cD))}$.
Here $u$ and $u_{h,m}^R$ are the solutions of~\eqref{e:Lbeta} and \eqref{e:uhr}
and the rational approximation $u_{h,m}^R$
is constructed as described in \S\ref{subsec:rat-approx},
assuming that $\hat{r}=\hat{r}_h$ is the $L_\infty$-best
rational approximation of $\hat{f}(x) = x^{\beta - m_\beta}$
on the interval $J_h$ for each $h$.
This means that \kk{$\hat{r}_h$}
minimizes the error in the supremum norm
on $J_h$ among all
rational approximations of the chosen degrees in numerator and denominator.
How such approximations can be computed is discussed in \S\ref{subsec:rat-coeff}.

The theoretical analysis presented in Appendix~\ref{app:convergence}
results in the following theorem,
showing strong convergence of the rational approximation $u_{h,m}^R$
to the exact solution $u$.

\begin{theorem}\label{thm:strong}
	
\kk{Suppose that $\beta > d/4$ and 
that $L$ is given by \eqref{e:L-div} 
where $\mv{H}$ and $\kappa$ satisfy the 
assumptions~\ref{ass:coeff-H}--\ref{ass:coeff-kappa} 
from \S\ref{subsec:fractional}. 
Let $u$, $u_{h,m}^R$ be the solutions
to~\eqref{e:Lbeta} and~\eqref{e:uhr}, respectively.
Then, there is a constant $C>0$, 
independent of $h, m$, 
such that, 
for sufficiently small $h$,} 	
\begin{align*}
    \kk{\norm{u - u_{h,m}^R}{L_2(\Omega;L_2(\cD))}
        \leq C \Bigl( h^{\min\{ 2\beta-d/2, \, 2 \}}
              + \mathbbm{1}_{\beta\notin \mathbb{N}} h^{\min\{2(\beta-1),\, 0\}-d/2} 
              e^{-2\pi \sqrt{|\beta-m_\beta|m}} \Bigr).}
\end{align*}
\end{theorem}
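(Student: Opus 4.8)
The plan is to split the total error into the finite element error, which is already controlled by Proposition~\ref{prop:uh}, and the error arising from the rational approximation of the fractional power on the discrete level. Writing
\begin{align*}
\norm{u - u_{h,m}^R}{L_2(\Omega;L_2(\cD))}
\leq
\norm{u - u_h}{L_2(\Omega;L_2(\cD))}
+
\norm{u_h - u_{h,m}^R}{L_2(\Omega;L_2(\cD))},
\end{align*}
the first term contributes the $h^{\min\{2\beta-d/2,2\}}$ factor directly. Everything then reduces to estimating the second term, which is purely a statement about the two $V_h$-valued Gaussian fields $u_h = L_h^{-\beta}\white_h$ and $u_{h,m}^R = \lop^{-1}\rop\,\white_h$ sharing the same white noise $\white_h$.

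\textbf{Reduction to a spectral/functional-calculus estimate.} Since $u_h$ and $u_{h,m}^R$ are driven by the same noise, I would diagonalize $L_h$ in its $L_2(\cD)$-orthonormal eigenbasis $\{e_{j,h}\}$ with eigenvalues $\lambda_{j,h}$. Then $u_h - u_{h,m}^R = \sum_j \bigl(\lambda_{j,h}^{-\beta} - g_m(\lambda_{j,h})\bigr)\xi_j\, e_{j,h}$, where $g_m(x) = p_\ell(x)^{-1}p_r(x)$ is the rational approximant of $x^{-\beta}$ built in \eqref{e:xbeta}. Taking the $L_2(\Omega;L_2(\cD))$ norm and using independence and unit variance of the $\xi_j$ gives
\begin{align*}
\norm{u_h - u_{h,m}^R}{L_2(\Omega;L_2(\cD))}^2
=
\sum_{j=1}^{n_h}\bigl|\lambda_{j,h}^{-\beta} - g_m(\lambda_{j,h})\bigr|^2.
\end{align*}
The key observation is that $g_m$ is obtained from the $L_\infty$-best rational approximant $\hat r_h$ of $\hat f(x)=x^{\beta-m_\beta}$ on $J_h=[\lambda_{n_h,h}^{-1},\lambda_{1,h}^{-1}]$ via the substitution $x\mapsto x^{-1}$ and multiplication by $x^{m_\beta}$. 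Hence $\bigl|\lambda_{j,h}^{-\beta}-g_m(\lambda_{j,h})\bigr| = \lambda_{j,h}^{m_\beta}\bigl|\hat f(\lambda_{j,h}^{-1}) - \hat r_h(\lambda_{j,h}^{-1})\bigr| \leq \lambda_{j,h}^{m_\beta}\,\norm{\hat f - \hat r_h}{L_\infty(J_h)}$, because $\lambda_{j,h}^{-1}\in J_h$ for every $j$.

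\textbf{Invoking the classical rational approximation bound.} The decisive analytic input is the classical result (due to Stahl, building on Gonchar--Rakhmanov) that the error of the best rational approximation of $x^{\alpha}$ with exponent $\alpha=\beta-m_\beta\in(0,1)$ on an interval decays like $e^{-2\pi\sqrt{|\alpha|\,m}}$ in the degree $m$; after rescaling $J_h$ to a fixed reference interval this yields $\norm{\hat f-\hat r_h}{L_\infty(J_h)}\lesssim e^{-2\pi\sqrt{|\beta-m_\beta|\,m}}$, uniformly in $h$ up to a constant depending only on the length ratio of $J_h$. Substituting this into the sum and bounding $\sum_j \lambda_{j,h}^{2m_\beta}$ via the spectral asymptotics \eqref{e:lambdaj} (so that $\lambda_{j,h}\eqsim j^{2/d}$ and $\lambda_{n_h,h}\eqsim h^{-2}$) produces, after the standard inverse-inequality bookkeeping, the prefactor $h^{\min\{2(\beta-1),0\}-d/2}$ multiplying the exponential. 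I would track where the two cases $\beta\geq 1$ and $\beta<1$ split, since $m_\beta=\max\{1,\floor{\beta}\}$ changes the power of $\lambda_{j,h}$ and hence whether the eigenvalue sum is dominated by the smallest or the largest eigenvalues.

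\textbf{The main obstacle.} The genuinely delicate point is making the rational approximation bound \emph{uniform in $h$}. The interval $J_h$ depends on $h$ through $\lambda_{n_h,h}^{-1}\eqsim h^2\to 0$, so $J_h$ degenerates to a point as $h\to0$; the best-approximation constant for $x^\alpha$ on $[a,b]$ depends only on the ratio $b/a$, and here $b/a\eqsim h^{-2}\to\infty$. One must therefore either accept a ratio-dependent constant and absorb its growth into the $h$-powers, or replace $J_h$ by the fixed interval $J=[0,\lambda_1^{-1}]$ at the cost of controlling the approximation down to $x=0$, where $x^\alpha$ is non-smooth. I expect the honest route to be the rescaling argument that extracts a constant independent of $h$ at the price of the explicit $h^{-d/2}$ factor, and verifying that this constant is truly $h$- and $m$-independent — exactly as claimed in the theorem — is the crux. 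The remaining steps (the triangle split, the diagonalization, summing the eigenvalue series) are routine once this uniform exponential estimate is in hand.
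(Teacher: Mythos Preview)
Your architecture---triangle inequality, diagonalization in the $L_h$-eigenbasis, and Stahl's bound on best rational approximation of $x^{\alpha}$---is exactly the paper's. However, two concrete errors in the execution would cause the argument to fail as written.

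\textbf{Sign error in the factorization.} From $f(x)=\hat f(x)\,x^{m_\beta}$ and $r_h(x)=\hat r_h(x)\,x^{m_\beta}$ one obtains, after the substitution $x\mapsto x^{-1}$,
\[
\bigl|\lambda_{j,h}^{-\beta}-g_m(\lambda_{j,h})\bigr|
=\lambda_{j,h}^{-m_\beta}\,\bigl|\hat f(\lambda_{j,h}^{-1})-\hat r_h(\lambda_{j,h}^{-1})\bigr|,
\]
with exponent $-m_\beta$, not $+m_\beta$. Since $\lambda_{j,h}\geq\lambda_1\geq 1$ (after the normalization assumed in the paper), this factor is $\leq 1$ and is simply discarded. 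Your plan to ``bound $\sum_j\lambda_{j,h}^{2m_\beta}$'' would be disastrous: for $m_\beta\geq 1$ and $\lambda_{n_h,h}\eqsim h^{-2}$ this sum is of order $h^{-d-4m_\beta}$, far too large to recover the stated prefactor. In the paper the sum is instead bounded crudely by $n_h\cdot\sup_{x\in J_h}|f(x)-r_h(x)|^2$, and the factor $n_h\lesssim h^{-d}$ is the sole source of the $h^{-d/2}$.

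\textbf{The case $\beta<1$.} You write $\alpha=\beta-m_\beta\in(0,1)$, but for $\beta\in(d/4,1)$ one has $m_\beta=1$ and hence $\alpha\in(-1,0)$. Then $\hat f(x)=x^{\alpha}$ is singular at $x=0$ and Stahl's theorem does not apply on $[0,1]$. The paper's remedy is to pass to $\widetilde f(x)=x^{|\alpha|}$ via the substitution $x\mapsto x^{-1}$, apply Stahl on $[0,1]$, and rescale back to $\widetilde J_h=[\lambda_{1,h},\lambda_{n_h,h}]$; the rescaling produces an extra factor $\lambda_{n_h,h}^{|\alpha|}=\lambda_{n_h,h}^{1-\beta}\lesssim h^{-2(1-\beta)}$. \emph{This} rescaling factor, not the eigenvalue count, is the origin of the $h^{\min\{2(\beta-1),0\}}$ in the theorem; your ``main obstacle'' paragraph conflates the two sources. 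For $\beta\geq 1$ the exponent $\alpha\in[0,1)$ is nonnegative, Stahl applies directly on the fixed interval $[0,1]\supset J_h$, and no $h$-dependent rescaling constant appears---consistent with $\min\{2(\beta-1),0\}=0$.
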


\begin{remark}\label{rem:calibrate-h-m}
In order to calibrate the accuracy of the rational approximation
with the finite element error,
one can choose $m\in\bbN$ such that
\kk{$ e^{-2\pi \sqrt{|\beta - m_\beta| m}} 
\propto h^{2 \max\{\beta,\, 1\}}$. 
The strong rate of mean-square convergence is then 
$\min\{ 2\beta - d/2, \, 2 \}$.}
\end{remark}

\begin{remark}\label{rem:p-fem}
	\kk{If the functions $\mv{H}$ and $\kappa$ of the 
	operator $L$ in~\eqref{e:L-div} are smooth,  
	$\mv{H}\in C^\infty(\overline{\cD})^{d\times d}$ and 
	$\kappa\in C^\infty(\overline{\cD})$ 
	(as, e.g., in the Mat\'ern case) 
	and if the domain $\cD$ has a smooth boundary, 
	the higher-order strong mean-square convergence rate 
	$\min\{ 2\beta - d/2, \, p+1 \}$  
	can be proven for a finite element method 
	with continuous basis functions 
	which are piecewise polynomial 
	of degree at most $p\in\bbN$. 
	Thus, for $\beta>1$, 
	finite elements with 
	$p > 1$ may be meaningful}. 
\end{remark} 

\subsection{Computing the coefficients of the rational approximation}\label{subsec:rat-coeff}

As explained in \S\ref{subsec:rat-approx}, the coefficients
$\{c_i\}_{i=0}^{m}$ and $\{b_j\}_{j=0}^{m+1}$ needed for defining the operators
$\lop$, $\rop$ in~\eqref{e:loprop}
are obtained from a rational approximation $\hat{r} = \hat{r}_h$
of $\hat{f}(x) = x^{\beta - m_\beta}$ on $J_h$.
For each $h$, this approximation can, e.g., be computed
with the second Remez algorithm \citep{remez1934determination},
which generates the coefficients
of the $L_\infty$-best approximation.
The error analysis for the resulting
approximation $u_{h,m}^R$ in~\eqref{e:uhr} was performed 
in \S\ref{subsec:error}.
Despite the theoretical benefit of generating the
$L_\infty$-best approximation, the Remez algorithm
is often unstable in computations
and, therefore, we use a different method in our simulations.
However, versions of the Remez scheme were used, e.g., by \cite{harizanov2016optimal}.

A simpler and computationally more stable way
of choosing the rational approximation is, for instance,
the Clenshaw--Lord Chebyshev--Pad\'e algorithm \citep{baker1996pade}.
%
\db{To \kk{further} improve the stability of the method,
we will rescale the operator \kk{$L$} so that 
\kk{its eigenvalues are bounded from below by one},
which for the Mat\'ern case corresponds 
to reformulating the SPDE~\eqref{e:statmodel} as 
$(\mathrm{Id} - \kappa^{-2}\Delta)^{\beta} (\widetilde{\tau} u) = \white$ 
\kk{and using $L = \mathrm{Id} - \kappa^{-2}\Delta$, 
where $\mathrm{Id}$ denotes the identity on $L_2(\cD)$ 
and $\widetilde{\tau} := \kappa^{2\beta} \tau$}. }

In order to avoid computing  
a different rational approximation $\hat{r}$
for each finite element mesh \kk{width} $h$,
in practice we compute the approximation $\hat{r}$
only once on the interval $J_{*} := [\delta,1]$,
where $\delta\in(0,1)$ should ideally be chosen
such that $J_h \subset J_{*}$ for all considered
mesh sizes $h$.
For the numerical experiments later,
we will use $\delta = 10^{-(5+m)/2}$
when computing rational approximations of order $m$,
which gives acceptable results for all values of $\beta$.
As an example, the coefficients
computed with the Clenshaw--Lord Chebyshev--Pad\'e algorithm
on $J_{*}$
for the case of exponential covariance on $\bbR^2$
are shown in Table~\ref{tab:coeffs}.

\begin{table}
\centering
\begin{tabular}{lccccccccc}
\toprule
$m$ & $b_0$ & $c_0$  & $b_1$ & $c_1$ 	& $b_2$ & $c_2$ & $b_3$ & $c_3$ & $b_4$ \\
\cmidrule(r){2-10}
1 & 1.69e-2 & 7.69e-2 & 8.06e-1 & 1 		 & 2.57e-1 & 	& 		&	             & 		 \\
2 & 8.08e-4 & 5.30e-3 & 1.98e-1 & 4.05e-1 & 1.07  & 1	        & 1.41e-1 & 	& \\
3 & 3.72e-5 & 3.27e-4 & 3.03e-2 & 8.57e-2 & 6.84e-1 & 1.00 & 1.28       &	 1	& 9.17e-2 \\
\bottomrule
\end{tabular}
\vspace{0.2cm}
\caption{Coefficients of the rational approximation for $\beta = 3/4$
(exponential covariance on~$\bbR^2$) for $m=1,2,3$,
normalized so that $c_{m} = 1$.}
\label{tab:coeffs}
\end{table}

\section{Computational aspects of the rational approximation}
\label{sec:comp}
In the non-fractional case,
the sparsity of the precision matrix for the weights~$\mv{u}$ in \eqref{e:basisexp}
facilitates
fast computation of samples, likelihoods,
and other quantities of interest
for statistical inference.
The purpose of this section
is to show that the rational SPDE approximation
proposed in \S\ref{sec:rational}
preserves these good computational properties.

%
%
\kk{The representation~\eqref{e:nested-discrete} shows that $u_{h,m}^R$ 
can be seen as a Markov random field $x_{h,m}$, 
transformed by the operator $P_{r,h}$.
%
Solving this latent model 
as explained in Appendix~\ref{app:iter-fem}, 
yields an approximation of the form \eqref{e:basisexp}},
where $\mv{\Sigma}_{\mv{u}} = \rmat\mv{Q}^{-1}\rmat^{\trsp}$.
Here $\lmat, \rmat \in~\bbR^{n_h\times n_h}$ correspond to
the discrete operators~$\lop$ and $\rop$ in~\eqref{e:loprop}, respectively.
The matrix $\mv{Q} := \lmat^{\trsp} \mv{C}^{-1} \lmat$
is sparse if the mass matrix $\mv{C}$
with respect to the finite element basis $\{\varphi_j\}_{j=1}^{n_h}$
is replaced by the diagonal lumped mass matrix $\widetilde{\mv{C}}$,
see Appendix~\ref{app:iter-fem}.
By defining $\mv{x} \sim \pN(\mv{0},\mv{Q}^{-1})$,
we have $\mv{u} = \rmat\mv{x}$,
which is a transformed Gaussian Markov random field (GMRF).
Choosing $\mv{x}$ as a latent variable instead of $\mv{u}$
thus enables us to use all computational methods,
which are available for GMRFs \citep[see][]{rue05},
also for the rational SPDE approximation.

As an illustration, we consider the following hierarchical model,
with a latent field $u$ which is a rational approximation of \eqref{e:Lbeta},
\begin{equation}\label{e:model-yi}
\begin{split}
y_i &= u(\svec_i) + \varepsilon_i, \quad i=1,\ldots, N, \\
\lopex u &= \ropex \white \qquad\quad\, \,\, \text{in }\cD,
\end{split}
\end{equation}
where $u$ is observed under
i.i.d.~Gaussian measurement noise
$\varepsilon_i \sim \pN(0,\sigma^2)$.
Given that one can treat this case,
one can easily adapt the method to be used for inference in combination with MCMC or INLA \citep{rue09}
for models with more sophisticated likelihoods.

Defining the matrix $\mv{A}$ with elements
$A_{ij}= \varphi_j(\svec_i)$
and the vector $\mv{y} = (y_1,\ldots, y_N)^{\trsp}$
gives us the discretized model
\begin{equation}\label{e:model-ygivenx}
\begin{split}
    \mv{y}|\mv{x} &\sim \pN(\mv{A} \rmat \mv{x},\sigma^2\mv{I}), \\
    \mv{x}        &\sim \pN(\mv{0}, \mv{Q}^{-1}).
\end{split}
\end{equation}
In this way, the problem has been reduced
to a standard latent GMRF model
and a sparse Cholesky factorization
of $\mv{Q}$ can be used for sampling $\mv{x}$
from \db{$\pN(\mv{0}, \mv{Q}^{-1})$} 
as well as to evaluate \db{its log-density}~$\log\pi_x(\mv{x})$.
Samples of $\mv{u}$ can then be obtained
from samples of $\mv{x}$
via $\mv{u} = \rmat \mv{x}$.
For evaluating \db{the log-density of $\mv{u}$}, $\log\pi_u(\mv{u})$,
the relation $\log\pi_u(\mv{u}) = \log\pi_x(\rmat^{-1}\mv{u})$
can be exploited.
Furthermore, the posterior distribution of $\mv{x}$ is 
\kk{given by} 
$\mv{x}|\mv{y} \sim \pN\bigl( \mv{\mu}_{\mv{x}|\mv{y}}, \mv{Q}_{\mv{x}|\mv{y}}^{-1} \bigr)$,
where $\mv{\mu}_{\mv{x}|\mv{y}} = \sigma^{-2} \mv{Q}_{\mv{x}|\mv{y}}^{-1} 
	\rmat^{\trsp} \mv{A}^{\trsp} \mv{y}$
and
$\mv{Q}_{\mv{x}|\mv{y}} 
	= \mv{Q} + \sigma^{-2} \rmat^{\trsp} \mv{A}^{\trsp} \mv{A} \rmat$
is a sparse matrix.
Thus, simulations from \db{the distribution of $\mv{x}|\mv{y}$}, 
and evaluations
of \db{the corresponding log-density} 
$\log\pi_{x|y}(\mv{x})$, can be performed efficiently
via a sparse Cholesky factorization of $\mv{Q}_{\mv{x}|\mv{y}}$.
Finally, the marginal data log-likelihood is proportional to
\begin{align*}
    \log| \lmat | - \frac{1}{2} \log| \mv{Q}_{\mv{x}|\mv{y}} | 
    - N \log\sigma
    - \frac{1}{2} \left( \mv{\mu}_{\mv{x}|\mv{y}}^{\trsp} \mv{Q} \, \mv{\mu}_{\mv{x}|\mv{y}} 
    + \sigma^{-2} \left\|\mv{y} - \mv{A} \rmat \mv{\mu}_{\mv{x}|\mv{y}} \right\|^2\right).
\end{align*}

We therefore conclude that all
computations needed for statistical inference
can be facilitated by sparse Cholesky factorizations of
$\lmat$ and $\mv{Q}_{\mv{x}|\mv{y}}$.

\begin{remark}\label{rem:compcosts}
From the specific form of the matrices $\lmat$ and $\rmat$
addressed in Appendix~\ref{app:iter-fem},
we can infer that the number of non-zero elements in $\mv{Q}_{\mv{x}|\mv{y}}$
for a rational SPDE approximation of degree $m$
will be the same as the number of non-zero elements in $\mv{Q}_{\mv{x}|\mv{y}}$
for the standard (non-fractional) SPDE approach with $\beta = m+m_\beta$.
Thus, also the computational cost
will be comparable for these two cases.
\end{remark}

\begin{remark}
The matrix $\mv{Q}_{\mv{x}|\mv{y}}$ can be
ill-conditioned for $m>1$ if a FEM approximation
with piecewise \kk{linear} basis functions is used.
The numerical stability for large values of $m$ can likely
be improved by increasing the polynomial degree of the FEM basis functions, see also Remark~\ref{rem:p-fem}.
\end{remark}

\section{Numerical experiments}\label{sec:numerics}

\subsection{The Mat\'ern covariance on $\bbR^2$}\label{subsec:numerics-matern}
As a first test, we investigate the performance of the rational SPDE approach
for Gaussian Mat\'ern fields, without including the finite element discretization in space.

The spectral density $S$ of the solution to \eqref{e:statmodel} on $\bbR^2$
is given by \eqref{eq:maternSpec},
whereas the spectral density for the non-discretized rational SPDE approximation $u_m^R$ in~\eqref{e:ur} is
\begin{align}\label{e:S-rat}
    S_R(\mv{k}) \propto \kappa^{4\beta}\left(\frac{\sum_{i=1}^m c_i(1+\kappa^{-2}\|\mv{k}\|^2)^{m-i}}{\sum_{j=1}^{m+1} b_j(1+\kappa^{-2}\|\mv{k}\|^2)^{m+m_\beta-j}}\right)^2.
\end{align}
We compute the coefficients as described in \S\ref{subsec:rat-coeff}.
To this end, we apply an implementation of the
Clenshaw--Lord Chebyshev--Pad\'e algorithm
provided by the Matlab package Chebfun \citep{driscoll2014chebfun}.
By performing a partial fraction decomposition of \eqref{e:S-rat},
expanding the square, transforming to polar coordinates, and using the equality
\begin{align*}
    \int_0^{\infty} \frac{\omega J_0(\omega h)}{(\omega^2+a^2)(\omega^2+b^2)} \,\rd\omega
    = \frac{1}{(b^2-a^2)}(K_0(ah)-K_0(bh)),
\end{align*}
we are able to compute the corresponding covariance function $C_R(h)$ analytically.
Here, $J_0$ is a Bessel function of the first kind
and $K_0$ is a modified Bessel function of the second kind.
To measure the accuracy of the approximation,
we compare $C_R(h)$ to the true Mat\'ern covariance function $C(h)$
for different values of $\nu$, where $\kappa=\sqrt{8\nu}$ is chosen such that
the practical correlation range 
\kk{$r=\sqrt{8\nu}/\kappa$}
equals one in all cases.

\begin{figure}[t]
\begin{center}
\includegraphics[width=0.49\linewidth]{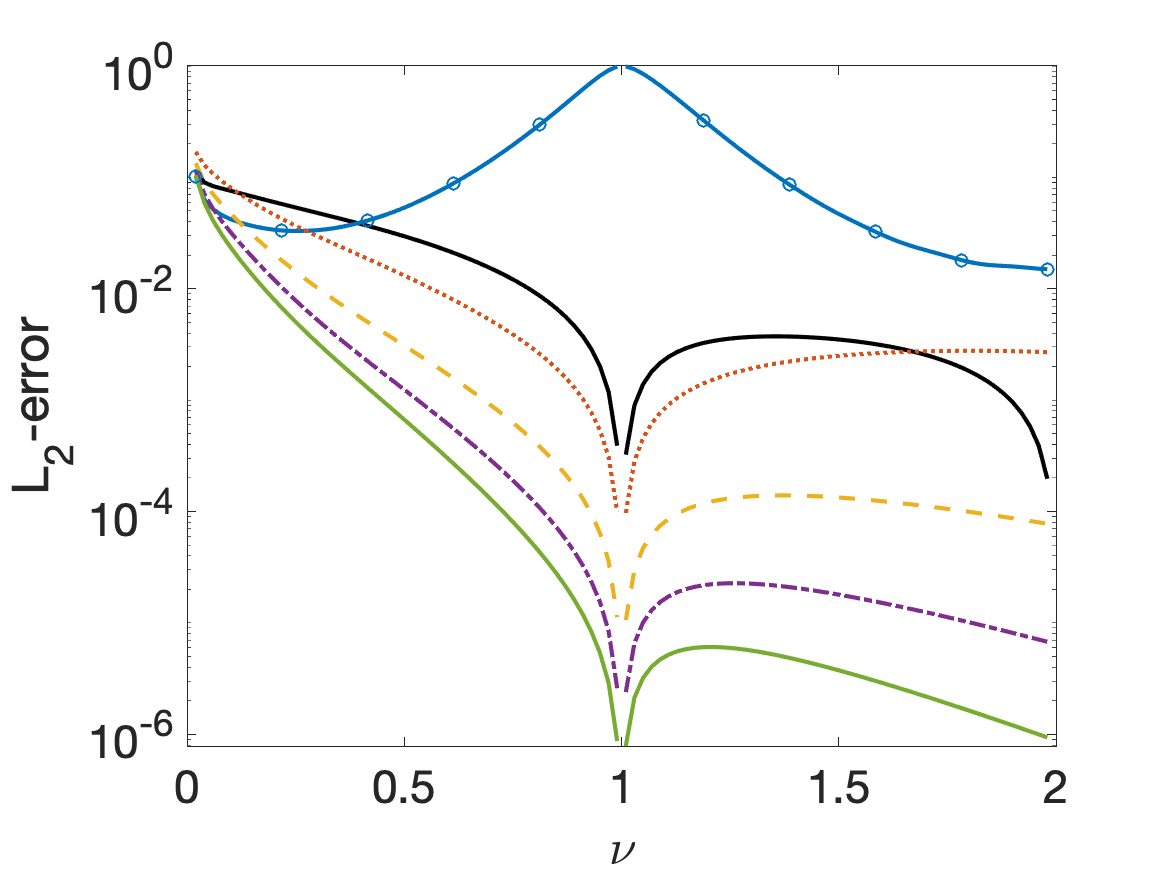}
\includegraphics[width=0.49\linewidth]{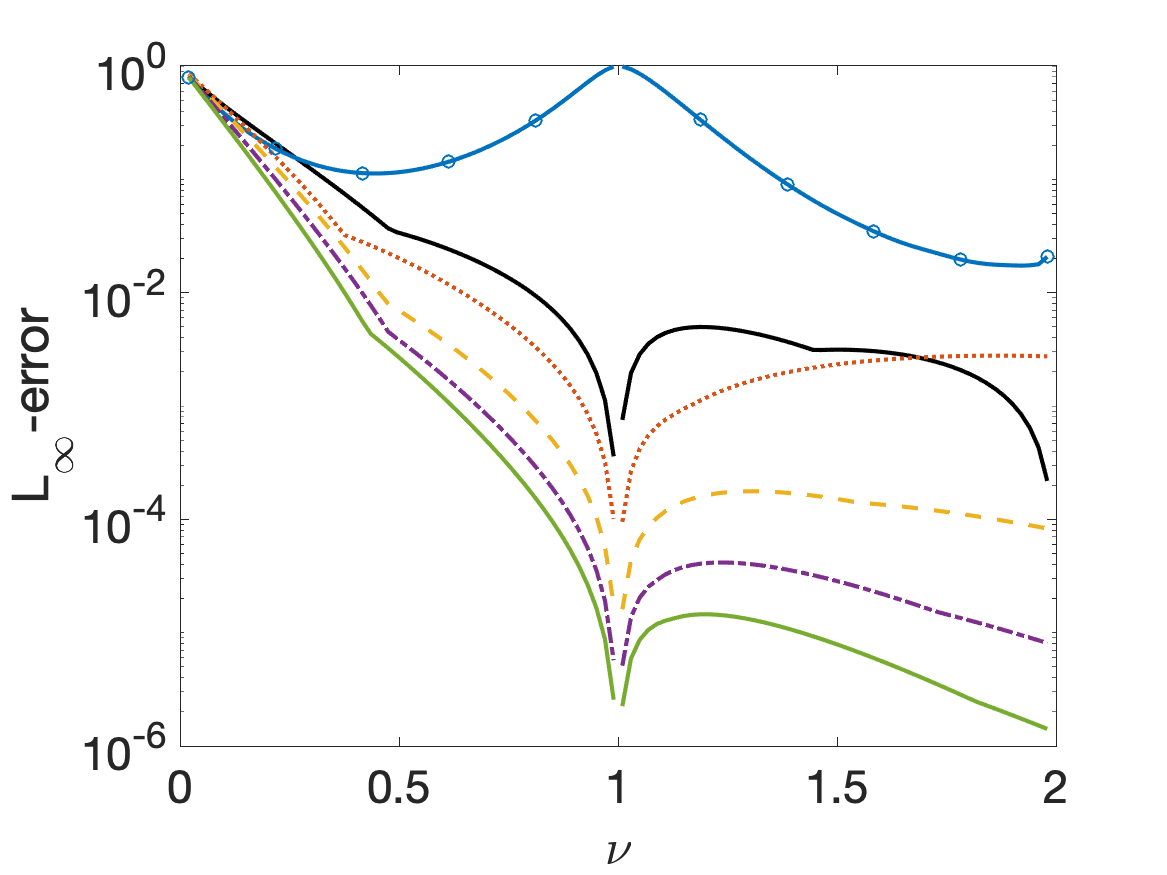}
\includegraphics[width=0.4\linewidth]{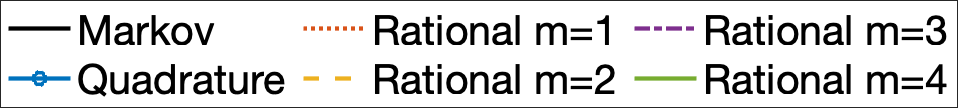}
\end{center}
\vspace{-0.2cm}
\caption{\label{fig:rational_errors}
The $L_2$- and $L_{\infty}$-errors of the covariance functions for different values of $\nu$ for the different approximation methods. When $\nu=1$, all methods are exact.
}
\end{figure}

To put the accuracy of the rational approximation in context,
the Markov approximation by \cite{lindgren11} and the quadrature method by \cite{bolin2017numerical} are also shown.
For the quadrature method, $K=12$ quadrature nodes are used,
which results in an approximation with the same computational cost
as a rational approximation of degree $m=11$, see Appendix~\ref{subsec:rat-comparequad}.
Figure~\ref{fig:rational_errors} shows the normalized error in the $L_2$-norm
and the error with respect to $L_\infty$-norm
for different values of $\nu$,
both with respect to the
interval~$[0,2]$ of length twice the practical correlation range, i.e.,
\begin{align*}
    \left( \frac{\int_0^2 (C(h)-C_a(h))^2 \, \rd h}{\int_0^2 C(h)^2 \, \rd h} \right)^{1/2}
    \quad
    \text{and}
    \quad
    \sup_{h\in[0,2]} |C(h) - C_a(h)|.
\end{align*}
Here,
$C_a$ is the covariance function
obtained by the respective approximation method.

Already for $m=3$, the rational approximation performs better than both
the Markov approximation and the quadrature approximation for all values of $\nu$.
It also decreases the error for the case of an exponential covariance by several orders of magnitude.

All methods are exact when $\nu=1$, 
since this is the non-fractional case. 
The Markov and rational methods 
show errors decreasing to zero as $\nu=1$, 
whereas the error of the quadrature method has  
\kk{a singularity at $\nu=1$}. 
The performance of the quadrature method can be improved (although not the behavior near $\nu=1$)
by increasing the number of quadrature nodes, see Appendix~\ref{subsec:rat-comparequad}.
This is reasonable if the method is needed only for sampling from the model,
but implementing this method for statistical applications,
which require kriging or likelihood evaluations, is not feasible
since the computational costs then are comparable
to the standard SPDE approach with $\beta=K$. 

Finally, it should be noted that the Markov method 
also is exact at $\nu=2$ ($\beta = 1.5$) 
since the spectrum of the process then is the reciprocal of a polynomial. 
The rational and quadrature methods 
cannot \kk{exploit} this fact, since 
\kk{these approximations 
are based on the corresponding differential operator 
instead of the spectral density}. 
This is the prize that has to be paid 
in order to \kk{formulate} a method which works 
not only for the stationary Mat\'ern fields 
but also for non-stationary and non-Gaussian models.

\subsection{Computational cost and the finite element error}
From the study in the previous subsection,
we infer that the rational SPDE approach performs well
for Mat\'ern fields with arbitrary smoothness.
However, as for the standard SPDE approach, we need to discretize
the problem in order to be able to use the method in practice, e.g., for inference.
This induces an additional error source,
which means that one should balance the two errors
by choosing the degree $m$ of the rational approximation
appropriately with respect to the FEM error.
\kk{A calibration 
based on the theoretical results has been suggested in 
Remark~\ref{rem:calibrate-h-m}}.
In this section we address this issue in practice
and investigate the computational cost of the rational SPDE approximation.

As a test case, we compute approximations
of a Gaussian Mat\'ern field
with unit variance and
practical correlation range $r=0.1$
on the unit square in $\bbR^2$.
We assume homogeneous Neumann boundary conditions
for the Mat\'ern operator $\kappa^2-\Delta$ in \eqref{e:statmodel}.
For the discretization, we use
a FEM with a nodal basis of continuous piecewise \kk{linear} functions
with respect to a mesh induced by a Delaunay triangulation
of a regular lattice on the domain, with a total of $n_h$ nodes.
We consider three different meshes
with $n_h = 57^2, 85^2, 115^2$,
which corresponds to $h \approx r/4, r/6, r/8$.

In order to measure the accuracy,
we compute the covariances between the
midpoint of the domain $\tilde{\mv{s}}_{*}$
and all other nodes in the lattice $\{\tilde{\mv{s}}_{j}\}_{j=1}^{n_h}$
for the Mat\'ern field
and the rational SPDE approximations
and calculate the error similarly
to the $L_2$-error in \S\ref{subsec:numerics-matern},
\begin{align*}
    \left( \frac{ \sum_{j=1}^{n_h} ( C(\| \tilde{\mv{s}}_{*}  - \tilde{\mv{s}}_{j}\|) - \Sigma^{\mv{u}}_{j,*} )^2 }{ \sum_{j=1}^{n_h} C(\| \tilde{\mv{s}}_{*}  - \tilde{\mv{s}}_{j}\|)^2 } \right)^{1/2}, 
\end{align*}
where $\mv{\Sigma}^{\mv{u}} = \rmat\lmat^{-1} \mv{C} \lmat^{-\trsp} \rmat^{\trsp}$
is the covariance matrix of $\mv{u}$, see Appendix~\ref{app:iter-fem}.
As a consequence of imposing boundary conditions,
the error of the covariance is larger close
to the boundary of the domain.
However, we compare this error to the error of
the non-fractional SPDE approach, which has the same boundary effects.
%
%
As measures of the computational cost,
we consider the time it takes to sample $\mv{u}$
and to evaluate $\log|\mv{Q}_{\mv{x}|\mv{y}}|$
for the model \eqref{e:model-ygivenx} with $\sigma=1$,
when~$\mv{y}$ is a vector of noisy observations
of the latent field at $1000$ locations, drawn at random in the domain
(a similar computation time is needed to evaluate $\mv{\mu}_{\mv{x}|\mv{y}}$).

\begin{table}[t]
\centering
\begin{tabular}[b]{llcccccc}
\toprule
&& \multicolumn{3}{c}{Rational SPDE approximation} & \multicolumn{3}{c}{\db{Standard SPDE approach}}\\
\cmidrule(r){3-5}   \cmidrule(r){6-8}
$n$			&			& $m=1$ & $m=2$ & $m=3$ & $\beta = 2$ & $\beta = 3$ & $\beta = 4$\\
\cmidrule(r){1-8}
\multirow{2}{*}{$57^2$} 	& Error	& 1.849 &  1.339 &   1.415 &   2.259  &  2.173 &   2.147\\
				& Time 		& 1.5  (3.2)  &  1.8  (5.2) &   2.7   (8.7)  &  1.7  (2.6) &  1.7 (3.9) &  2.2  (6.3)\\
\cmidrule(r){2-8}
\multirow{2}{*}{$85^2$} & Error		& 1.720  &  0.757  &  0.807  &  0.953  &  0.928   & 0.921\\
				& Time 		& 3.1  (8.4)  &  5.0 (14)  &  7.6 (25)  &  3.0 (8.2)  &  5.8  (13) &   7.9 (22)\\
\cmidrule(r){2-8}
\multirow{2}{*}{$115^2$} & Error	& 1.559  &  0.526  &  0.501  &  0.509  &  0.498  &  0.494\\
				& Time 		& 7.6 (22) &  11 (34)  & 18 (57)  &  6.3 (18)  & 11 (35) &  18  (53)\\

\bottomrule
\end{tabular}
\vspace{0.2cm}
\caption{\label{tab:femerrors}Covariance errors ($\times 100$) and 
	computing times in seconds ($\times 100$)
	for sampling from the rational SPDE approximation $\mv{u}$ \db{(with $\beta=3/4$)}
	and, in  parentheses, 
	for evaluating $\log|\mv{Q}_{\mv{x}|\mv{y}}|$.
	These values are also given
	for the standard SPDE approach with $\beta=2,3,4$.}
\end{table}

The results for rational SPDE approximations of different degrees
for the case $\beta = 3/4$ (exponential covariance)
are shown in Table~\ref{tab:femerrors}.
Furthermore,
we perform the same experiment
when the standard (non-fractional) SPDE approach is used for $\beta = 2,3,4$.
As previously mentioned in Remark~\ref{rem:compcosts},
the computational cost of the rational SPDE approximation of degree~$m$
should be comparable to the standard SPDE approach with $\beta= m+1$.
Table~\ref{tab:femerrors} validates this claim.
One can also note that the errors of the rational SPDE approximations
are similar to those of the standard SPDE approach,
and that the reduction in error
when increasing from $m=2$ to $m=3$
is small for all cases,
indicating that the error induced by the rational approximation
is small compared to the FEM error, even for a low degree~$m$.
This is also the reason for why, in particular in the pre-asymptotic region,
one can in practice choose the degree $m$ smaller than the value suggested in
Remark~\ref{rem:calibrate-h-m},
which gives $m\approx 6, 7, 8$ for $\beta=3/4$ and the three considered finite element meshes.

\section{Likelihood-based inference of Mat\'{e}rn parameters}\label{sec:inference}
The  computationally efficient evaluation of the likelihood
of the rational SPDE approximation facilitates
likelihood-based inference for all parameters of the Mat\'ern model,
including~$\nu$ which until now had to be fixed
when using the SPDE approach.
In this section we first discuss the identifiability of the model parameters and then investigate the accuracy of this approach
within the scope of a simulation study.

\subsection{Parameter identifiability}\label{subsec:measure}
\kk{A common reason for fixing the smoothness 
in Gaussian Mat\'ern models is the result by \citet{Zhang2004} 
which shows that all three Mat\'ern parameters 
cannot be estimated consistently under infill asymptotics. 
More precisely, for a fixed smoothness parameter $\nu$, 
one cannot estimate both the variance of the field, 
$\phi^2$, and the scale parameter, $\kappa$, consistently. 
However, the quantity  $\phi^2\kappa^{2\nu}$  can be estimated consistently. 
The derivation of this result relies on the 
equivalence of Gaussian measures corresponding 
to Mat\'ern fields \citep[Theorem~2]{Zhang2004}. 
%
%
The following theorem provides the analogous result 
for the Gaussian measures induced by the class 
of random fields specified via \eqref{e:statmodel} 
on a bounded domain}. \kk{Its proof can be found in Appendix~\ref{sec:measureproof}. 

\begin{theorem}\label{thm:measure}
Let $\cD\subset\bbR^d$, $d\in\{1,2,3\}$,  
be bounded, open and connected.  	
For $i\in\{1,2\}$, 
let $\beta_i > d/4$, $\kappa_i, \tau_i > 0$, and 
consider the centered Gaussian measure 
$\mu_i:=\pN(0,\mathcal{Q}_i^{-1})$ 
on $L_2(\cD)$ 
with precision operator  
$\cQ_i := \tau_i^{2}L_i^{2\beta_i}$, 
where the operators $L_i:=\kappa_i^2 - \Delta$, $i\in\{1,2\}$, 
are augmented with the same homogeneous Neumann or Dirichlet boundary 
conditions. 
Then, $\mu_1$ and $\mu_2$ are equivalent 
if and only if $\beta_1=\beta_2$ and $\tau_1 = \tau_2$. 
\end{theorem}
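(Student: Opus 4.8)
The plan is to diagonalize both measures simultaneously and reduce the statement to a classical dichotomy for Gaussian measures. Since $L_1 = \kappa_1^2 - \Delta$ and $L_2 = \kappa_2^2 - \Delta$ are augmented with the \emph{same} boundary conditions, the Dirichlet (resp.\ Neumann) Laplacian on $\cD$ admits an orthonormal basis $\{e_j\}_{j\in\bbN}$ of $L_2(\cD)$ with $-\Delta e_j = \rho_j e_j$, shared by $L_1$ and $L_2$. Hence the precision operators $\cQ_i = \tau_i^2 L_i^{2\beta_i}$, and with them the covariance operators $\cC_i := \cQ_i^{-1}$, are simultaneously diagonal in $\{e_j\}$, with
\[
\cC_i e_j = \gamma_j^{(i)} e_j, \qquad \gamma_j^{(i)} = \tau_i^{-2}(\kappa_i^2 + \rho_j)^{-2\beta_i}.
\]
First I would record that, by Weyl's law \eqref{e:lambdaj} applied to $-\Delta$ (whose eigenvalues differ from those of $L_i$ only by the additive constant $\kappa_i^2$), one has $\gamma_j^{(i)} \eqsim j^{-4\beta_i/d}$, so that $\sum_j \gamma_j^{(i)} < \infty$ exactly when $\beta_i > d/4$. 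Thus the standing assumption is precisely what makes $\cC_i$ trace class and $\mu_i$ a genuine centered Gaussian measure on $L_2(\cD)$.

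Because $m_1 = m_2 = 0$ and $\cC_1,\cC_2$ commute, the Feldman--H\'ajek theorem (equivalently, Kakutani's dichotomy applied to the product of one-dimensional Gaussians in the coordinates $\{e_j\}$) reduces equivalence of $\mu_1$ and $\mu_2$ to the single summability condition
\[
\sum_{j\in\bbN} \left( \frac{\gamma_j^{(1)}}{\gamma_j^{(2)}} - 1 \right)^2 < \infty,
\]
with mutual singularity holding otherwise. The whole argument then rests on the asymptotics of the ratio
\[
\frac{\gamma_j^{(1)}}{\gamma_j^{(2)}} = \frac{\tau_2^2}{\tau_1^2} \cdot \frac{(\kappa_2^2 + \rho_j)^{2\beta_2}}{(\kappa_1^2 + \rho_j)^{2\beta_1}}
\]
as $j\to\infty$, using $\rho_j\to\infty$.

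For the necessity direction I would use that the summand must at least tend to zero. If $\beta_1 \neq \beta_2$, the ratio behaves like $\tfrac{\tau_2^2}{\tau_1^2}\rho_j^{2(\beta_2 - \beta_1)}$, which tends to $0$ or to $\infty$; if instead $\beta_1 = \beta_2$ but $\tau_1 \neq \tau_2$, the ratio tends to $\tau_2^2/\tau_1^2 \neq 1$. In either case the terms do not vanish, the series diverges, and $\mu_1 \perp \mu_2$. This forces $\beta_1 = \beta_2 =: \beta$ and $\tau_1 = \tau_2$.

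For sufficiency, with $\beta_1 = \beta_2 = \beta$ and $\tau_1 = \tau_2$ the ratio collapses to $\bigl(\tfrac{\kappa_2^2 + \rho_j}{\kappa_1^2 + \rho_j}\bigr)^{2\beta} = (1 + \delta_j)^{2\beta}$, where $\delta_j = \tfrac{\kappa_2^2 - \kappa_1^2}{\kappa_1^2 + \rho_j} = O(\rho_j^{-1})$. A first-order expansion gives $\tfrac{\gamma_j^{(1)}}{\gamma_j^{(2)}} - 1 = 2\beta\,\delta_j + O(\delta_j^2) = O(\rho_j^{-1})$, so by Weyl's law the summand is $O(\rho_j^{-2}) = O(j^{-4/d})$; the series therefore converges precisely when $4/d > 1$, i.e.\ for $d\in\{1,2,3\}$, giving equivalence. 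I expect the main (delicate rather than deep) point to be this final step: it is exactly the restriction $d \le 3$ that makes the series converge, and some care is needed to check that the expansion constants are uniform in $j$ and that the ratios, being bounded and bounded away from zero, also secure the Cameron--Martin part of the Feldman--H\'ajek criterion.
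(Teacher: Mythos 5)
Your proposal is correct and follows essentially the same route as the paper: simultaneous diagonalization in the Laplacian eigenbasis, Weyl's law for the eigenvalue asymptotics, and the final convergence of $\sum_j j^{-4/d}$ for $d\in\{1,2,3\}$ (the paper obtains the $O(\rho_j^{-1})$ bound via the mean value theorem rather than a Taylor expansion, which is immaterial). The only difference is packaging: you invoke the Kakutani/product-measure form of the Feldman--H\'ajek dichotomy, which folds everything into the single series $\sum_j (\gamma_j^{(1)}/\gamma_j^{(2)}-1)^2$ and lets you dispatch both necessity cases by noting the terms do not vanish, whereas the paper uses the three-condition version from Stuart and verifies equality of the Cameron--Martin spaces separately through two-sided bounds on the quadratic forms before checking the Hilbert--Schmidt condition.
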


Note that, for $\cD:=\mathbb{R}^d$, 
the parameter $\tau$ 
is related to the variance 
of the Gaussian random field 
via $\phi^2 = \Gamma(\nu)(\tau^2\Gamma(2\beta)(4\pi)^{d/2}\kappa^{2\nu})^{-1}$. 
Thus, $\tau^{-2}\propto \phi^2\kappa^{2\nu}$, 
which means that Theorem~\ref{thm:measure} 
is in accordance with the result by \citet{Zhang2004}.
Since the Gaussian measures induced by the operators 
$L_1= \tau(\kappa_1+\Delta)^{\beta}$ 
and $L_2= \tau(\kappa_2+\Delta)^{\beta}$ are equivalent, 
we will not be able to consistently estimate $\kappa$ 
under infill asymptotics. 
Yet, Theorem~\ref{thm:measure} 
suggests that it is possible  
to estimate $\tau$ and $\beta$ consistently. 
In fact, with Theorem~\ref{thm:measure} available, 
it is straightforward to show 
that $\tau$ can be estimated consistently for a fixed $\nu$ 
by exploiting the same arguments 
as in the proof of \citep[Theorem 3]{Zhang2004}. 
However, it is beyond the scope 
of this article 
to show that both $\nu$ and $\tau$ 
can be estimated consistently
which would also extend the results  
by \citet{Zhang2004}}.

\subsection{Simulation study}\label{subsec:estimationstudy}
\db{To numerically investigate the accuracy of likelihood-based parameter estimation using the rational SPDE approach, }
we again assume homogeneous Neumann boundary conditions for
the Mat\'ern operator in \eqref{e:statmodel} and consider
the standard latent model \eqref{e:model-yi}
from \S\ref{sec:comp}.
We take the unit square as the domain of interest,
set $\sigma^2=0.1$, $\nu=0.5$ and choose $\kappa$ and $\tau$
so that the latent field has variance $\phi^2=1$ and
practical correlation range~\kk{$r=0.2$}.
For the FEM, we take a mesh based on a regular lattice on the domain,
extended by twice the correlation range in each direction
to reduce boundary effects, yielding a mesh with approximately $3500$ nodes.

As a first test case, we use simulated data from the discretized model.
We simulate $50$ replicates of the latent field,
each with corresponding noisy observations at $1000$ measurement locations
drawn at random in the domain.
This results in a total of $50000$ observations,
which we use to estimate the parameters of the model.
We draw initial values for the parameters at random
and then numerically optimize the likelihood of the model
with the function \texttt{fminunc} in Matlab.
This procedure is repeated $100$ times, each time with a new simulated data set.

As a second test case, we repeat the simulation study,
but this time we simulate the
data from a Gaussian Mat\'ern field with an exponential covariance function
instead of from the discretized model.
For the estimation,
we compute the rational SPDE approximation for the same
finite element mesh as in the first test case.
To investigate the effect of the mesh resolution on the parameter estimates,
we also estimate the parameters using a uniformly refined mesh 
with twice as many nodes. 
The average computation time for 
evaluating the likelihood is approximately $0.16s$
for the coarse mesh and $0.4s$ for the fine mesh.
This computation time is affine with respect to
the number of replicates, and with only one replicate
it is $0.09s$ for the coarse mesh and $0.2s$ for the fine mesh.
\begin{table}
\centering
\begin{tabular}{ccccc}
\toprule
& 			& Rational samples & \multicolumn{2}{c}{Mat\'ern samples}\\
\cmidrule(r){3-3}   \cmidrule(r){4-5}
			& Truth	& Estimate		& Coarse mesh		& Fine mesh \\
$\kappa$		& 10 		& 10.026 (0.5661) 	& 10.966 (1.8060) 	& 10.864 (0.4414)\\
$\phi^2$		& 1.0 	& 1.0014 (0.0228) 	& 1.1089 (0.6155) 	& 0.9743 (0.0210)\\
$\sigma^2$	& 0.1 	& 0.1001 (0.0009) 	& 0.3016 (0.0036) 	& 0.2320 (0.0044)\\
$\nu$ 		& 0.5 	& 0.5011 (0.0168)	& 0.5554 (0.0991) 	& 0.5462 (0.0138)\\
\bottomrule
\end{tabular}
\vspace{0.2cm}
\caption{\label{tab:estimation}Results of the parameter estimation.
For each parameter estimate, the mean of 100 different estimates is shown, with the corresponding standard deviation in parentheses.}
\end{table}

The results of the parameter estimation can be seen in Table~\ref{tab:estimation},
where the true parameter values are shown
together with the mean and standard deviations
of the $100$ estimates for each case.
Notably, we are able to estimate all parameters
accurately in the first case.
For the second case, the finite element discretization
seems to induce a small bias, especially for the nugget estimate ($\sigma^2$)
that depends on the resolution of the mesh.
The bias in the nugget estimate is not surprising
since the increased nugget compensates
for the FEM error.
The bias could be decreased
by choosing the mesh more carefully,
also taking the measurement locations into account.
In practice, however, this bias will
not be of great importance,
since the optimal nugget for the discretized model
should be used.

It should be noted that there are several other methods
for decreasing the computational cost of likelihood-based inference for stationary Mat\'ern models.
The major advantage of the rational SPDE approach is
that it is directly applicable to more complicated
non-stationary models, which we will use in the next section when analyzing real data.

\section{Application}\label{sec:application}

In this section we illustrate
for the example of a climate reanalysis data set
how the rational SPDE approach can be used for spatial modeling.

Climate reanalysis data is generated
by combining a climate model with observations
in order to obtain a description of the recent climate.
We use reanalysis data generated with
the Experimental Climate Prediction Center Regional Spectral Model (ECPC-RSM)
which was originally prepared for
the North American Regional Climate Change Assessment Program (NARCCAP)
by means of NCEP/DOE Reanalysis \citep{mearns2007,mearns2009regional}.
As variable we consider average summer precipitation
over the conterminous U.S.\ for a 26 year period from 1979 to 2004.
The average value for each grid cell and year is computed
as the average of the corresponding daily values
for the days in June, July, and August.
\db{In order to \kk{obtain} data 
which can be modelled by a Gaussian distribution, 
we follow \cite{genton2015cross} 
and transform the data by taking the cube root. 
We then subtract the mean over the 26 years 
from each grid cell so that 
we can assume that the data has zero mean 
and focus on the correlation structure of the residuals.}
The resulting residuals for the year 1979 
are shown in Figure~\ref{fig:application_data}.

\begin{figure}[t]
\begin{center}
\includegraphics[width=0.75\linewidth]{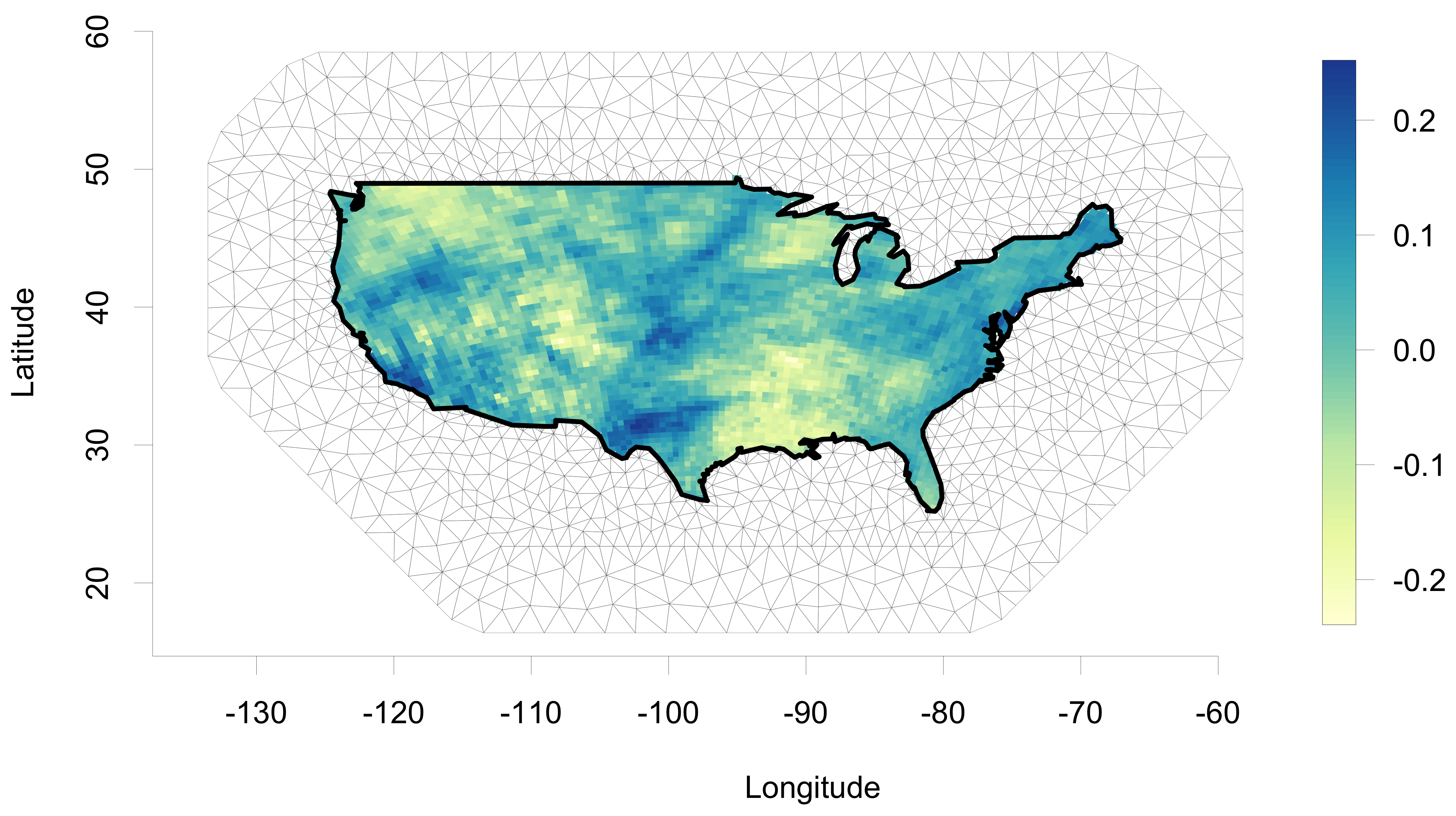}
\end{center}
\vspace{-0.2cm}
\caption{\label{fig:application_data}Average summer precipitation residuals (in cm) 
	for 1979 and the \kk{FEM mesh}. 
	}
\end{figure}


The 4112 observed residuals for each year are modeled as independent realizations
of a zero-mean Gaussian random field with a nugget effect.
That is, the measurement $Y_{ij}$ at spatial location $\mv{s}_i$
for year $j$ is modeled as
    $Y_{ij} = u_j(\mv{s}_i) + \varepsilon_{ij}$,
where $\varepsilon_{ij} \sim \pN(0,\sigma^2)$ are independent,
and \kk{$\{u_j(\mv{s})\}_j$} are independent realizations of a
zero-mean Gaussian random field $u(\mv{s})$.
\db{The analysis of \cite{genton2015cross} 
revealed that an exponential covariance model 
is suitable for a subset of this data set. 
Because of this, a natural first choice is 
to use a stationary Mat\'ern model \eqref{e:statmodel}, 
either with $\beta = 0.75$  (exponential covariance) 
or with a general $\beta$ which we estimate from the data. 
However, since we have data for a larger spatial region 
than \cite{genton2015cross}, one would suspect 
that a non-stationary model for $u(\mv{s})$ might be needed. 
The standard non-stationary model for the SPDE approach, 
as \kk{first} suggested by \cite{lindgren11} and used in many applications since then, is 
\begin{equation}\label{eq:nonstat_model}
    (\kappa(\mv{s})^2 - \Delta)^{\beta} \, (\tau(\mv{s})u(\mv{s})) = \white(\mv{s}),
    \quad
    \mv{s}\in\cD,
\end{equation}
where \kk{$\beta=1$ is fixed}. 
Until now, it has not been possible to use 
\kk{the model} \eqref{eq:nonstat_model} 
with fractional smoothness.  
\kk{Therefore, our main question is now: 
What is more important for this 
data---the fractional smoothness 
$\beta$ or the non-stationary parameters?} 
We thus consider four different SPDE models for $u(\mv{s})$.
\kk{Two of them are non-fractional models, 
where $\beta=1$ is fixed, 
and for the other two (fractional) 
models, we estimate the fractional order 
$\beta$ jointly with the other parameters
from the data. 
For both cases, 
we consider stationary Mat\'ern 
and non-stationary models, 
where the latter are formulated via \eqref{eq:nonstat_model} 
with}}
%
%
\begin{align*}
\log\kappa(\mv{s}) 
= 
\kappa_0 
+ \kappa_{\rm a} \psi_{\rm a}(\mv{s}) 
+ \sum_{i,j=1}^{2}\sum_{k,\ell=1}^{2} 
	\kappa_{ij}^{k\ell} \, 
	\psi_{i}^{k}(\widetilde{s}_1) \, 
	\psi_{j}^{\ell}(\widetilde{s}_2), 
\end{align*}
and the same model is used for $\tau(\mv{s})$.  
Here, 
$\psi_{j}^{1}(\widetilde{s}) := \sin(j\pi\widetilde{s})$,  
$\psi_{j}^{2}(\widetilde{s}) := \cos(j\pi\widetilde{s})$, 
$\psi_{\rm a}(\mv{s})$ is the altitude at location $\mv{s}$,  
and $\widetilde{\mv{s}} = (\widetilde{s}_1,\widetilde{s}_2)$ 
denotes the spatial coordinate after rescaling 
so that the observational domain is mapped to the unit square. 
Thus, $\log\kappa(\mv{s})$ and $\log\tau(\mv{s})$ are modelled 
by the altitude covariate and 16 additional Fourier basis functions 
to capture large-scale trends in the parameters. 
The altitude covariate and the eight Fourier basis functions 
$\left\{\psi_1^{k}(\widetilde{s}_1) \psi_j^{\ell}(\widetilde{s}_2) 
: j,k,\ell=1,2\right\}$
are shown 
in Figure~\ref{fig:basis_functions}.

\begin{figure}[t]
\begin{center}
\begin{minipage}{0.25\linewidth}
\includegraphics[width=\linewidth]{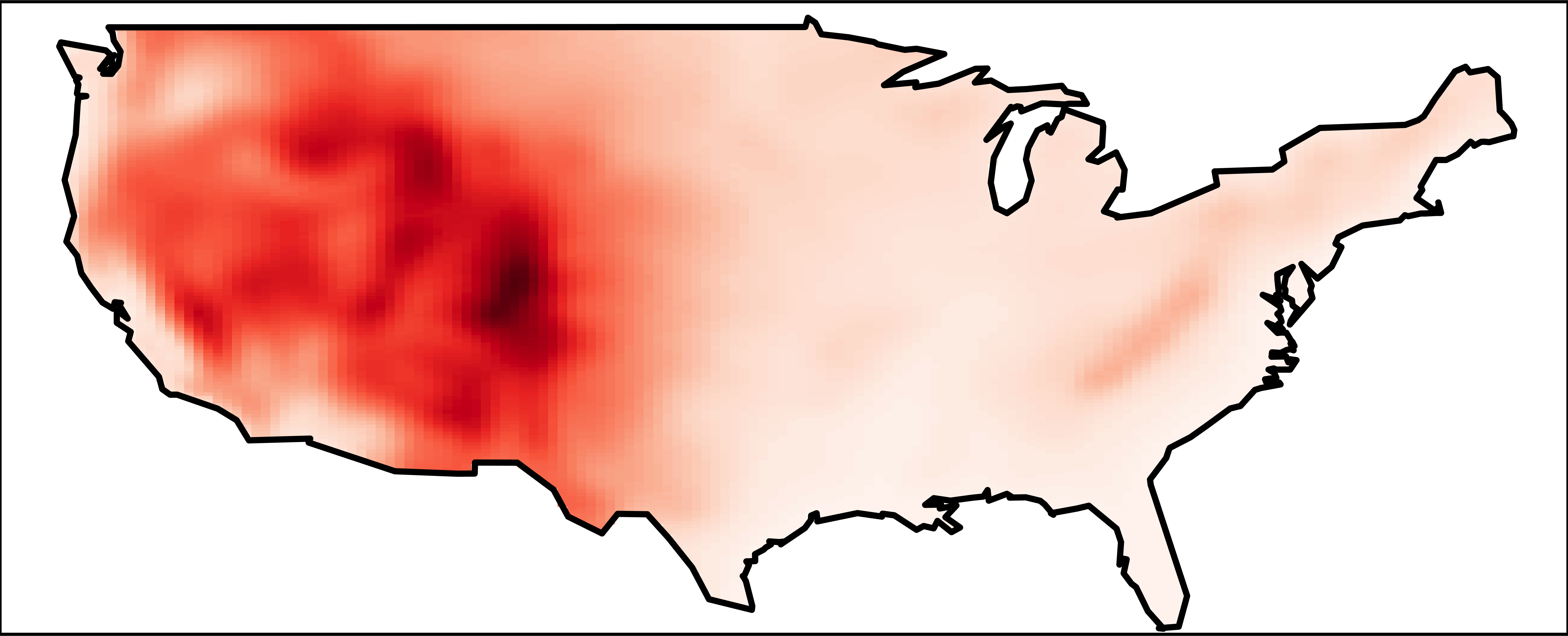}
\includegraphics[width=\linewidth]{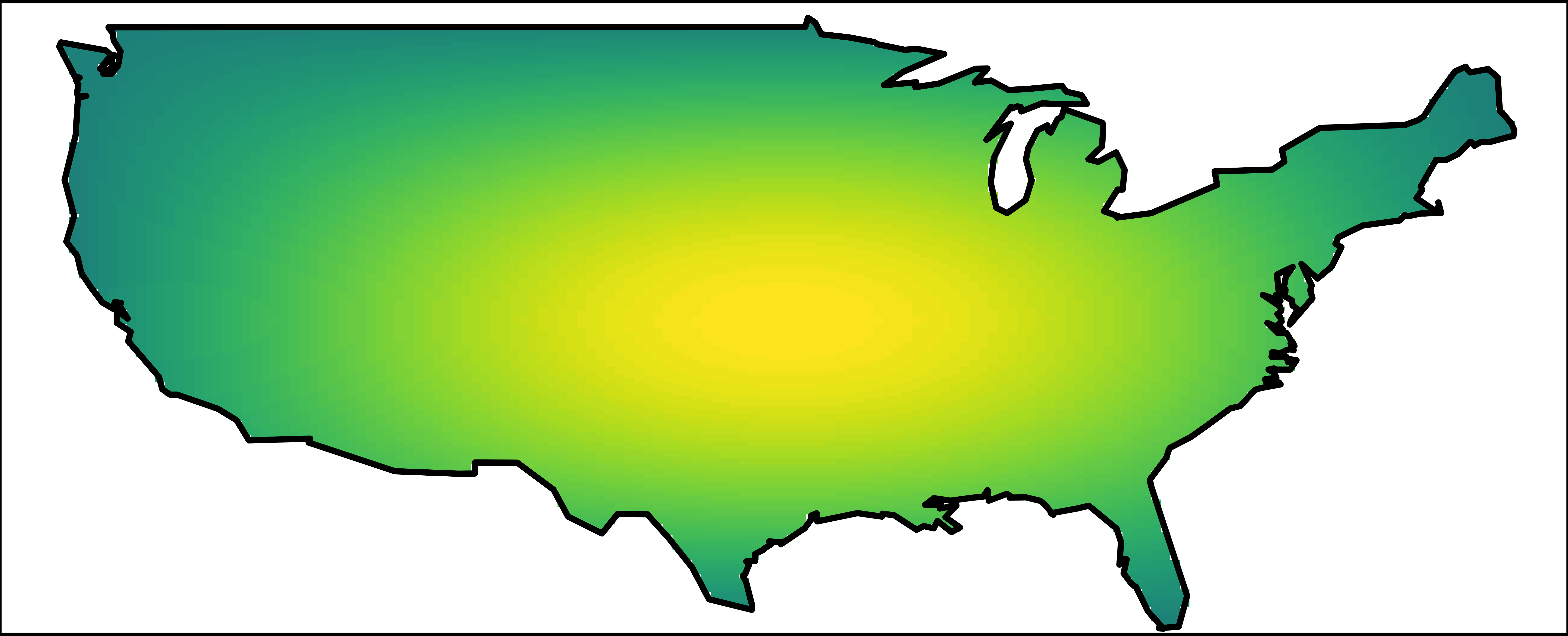}
\includegraphics[width=\linewidth]{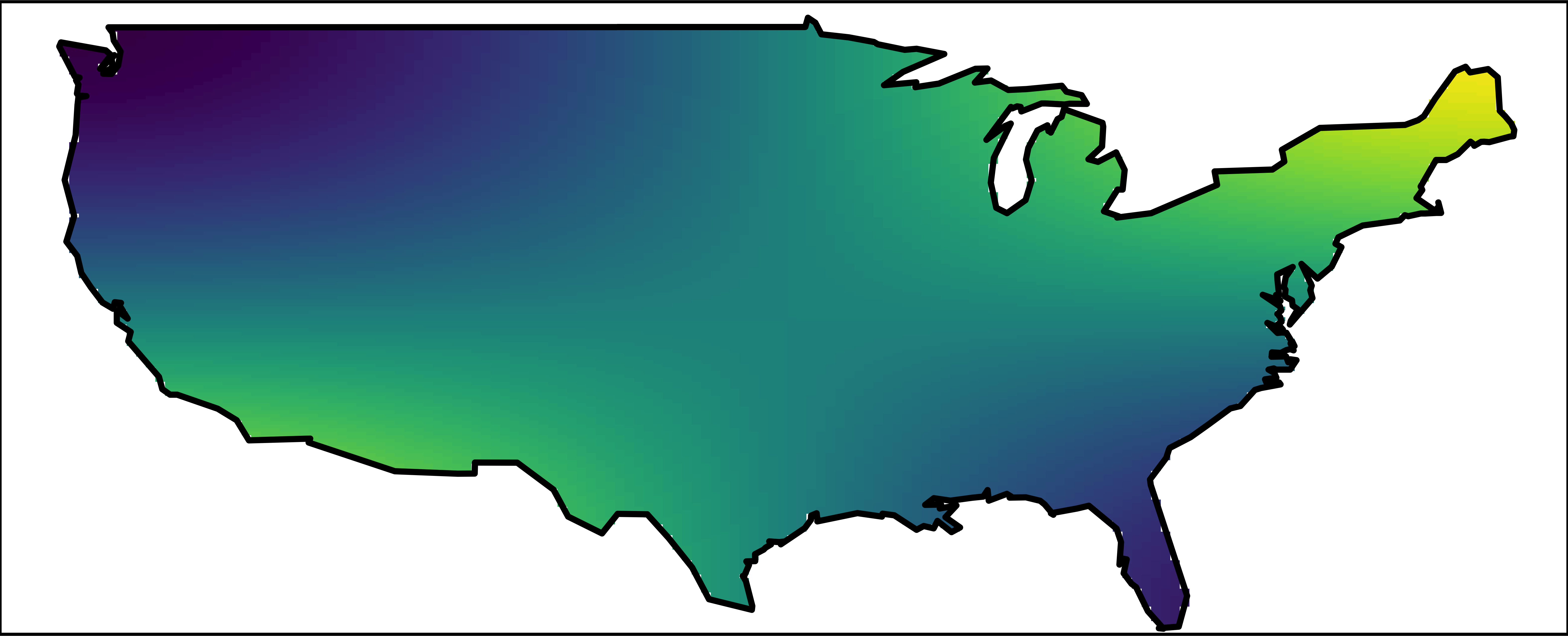}
\end{minipage}
\begin{minipage}{0.25\linewidth}
\includegraphics[width=\linewidth]{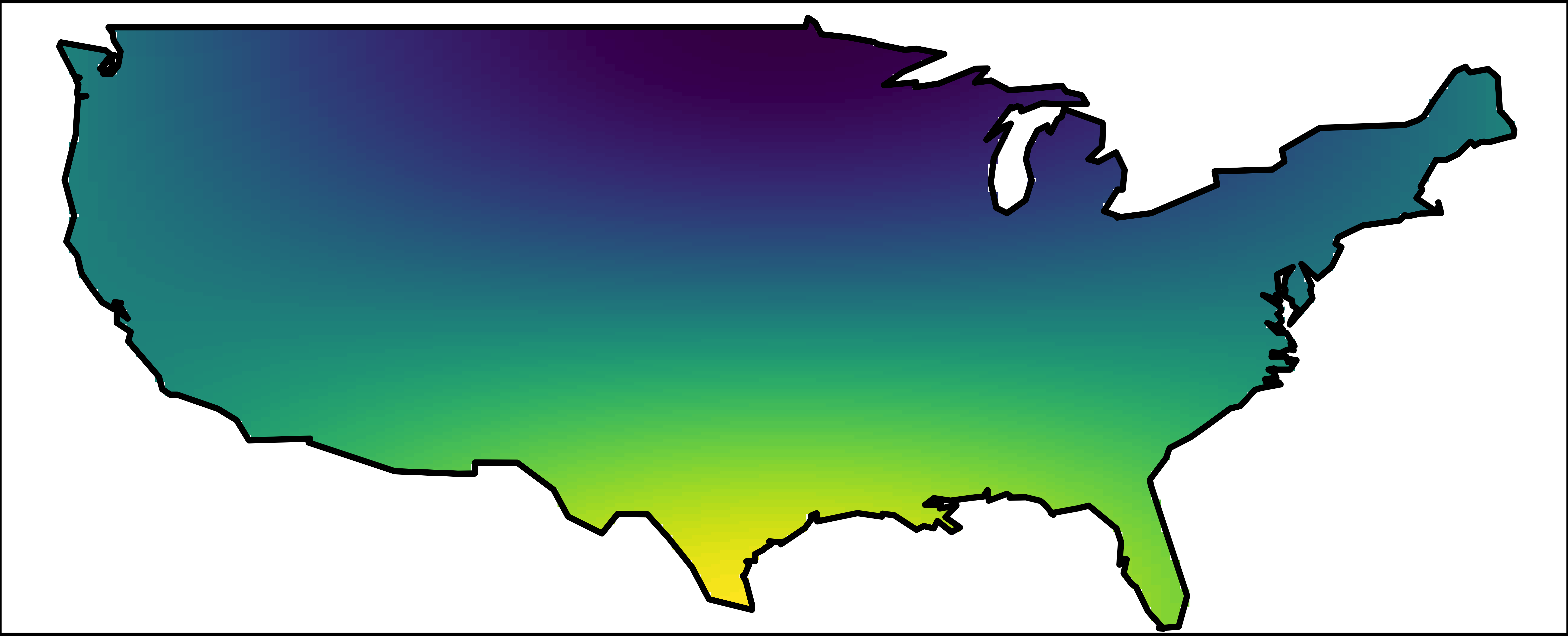}
\includegraphics[width=\linewidth]{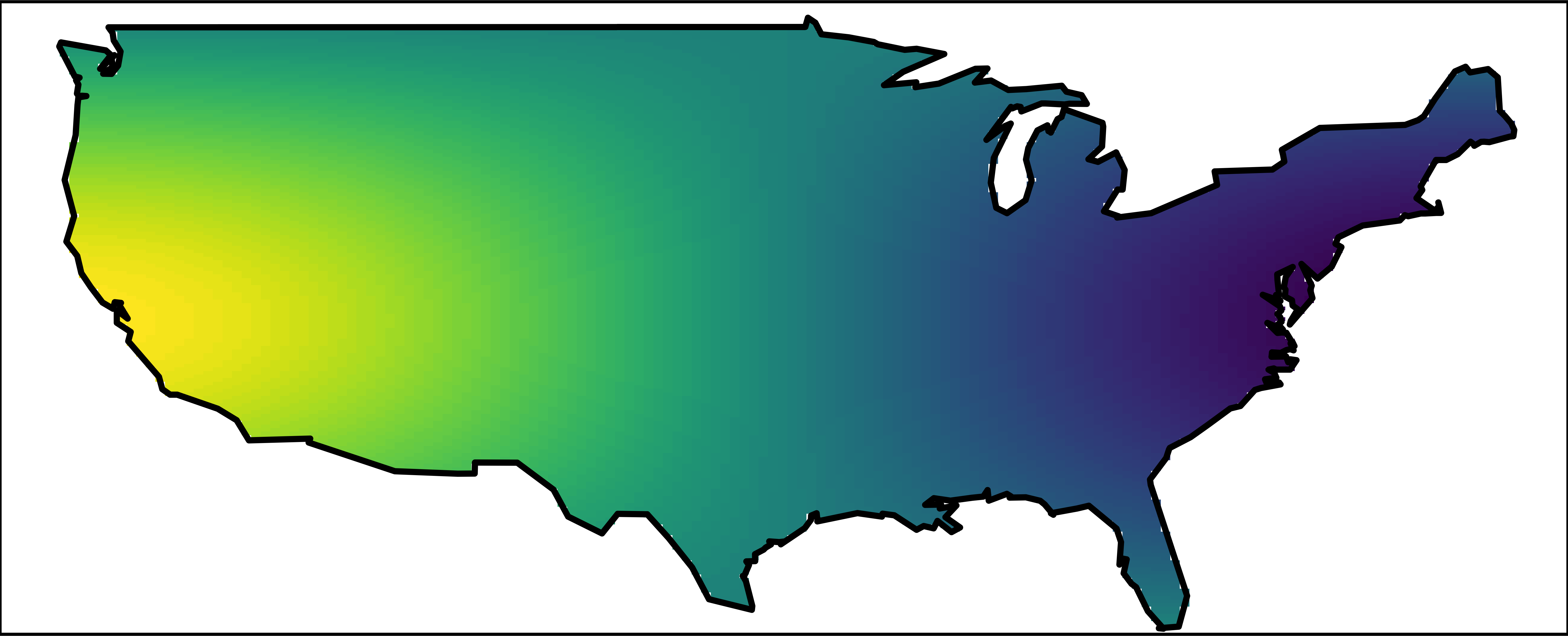}
\includegraphics[width=\linewidth]{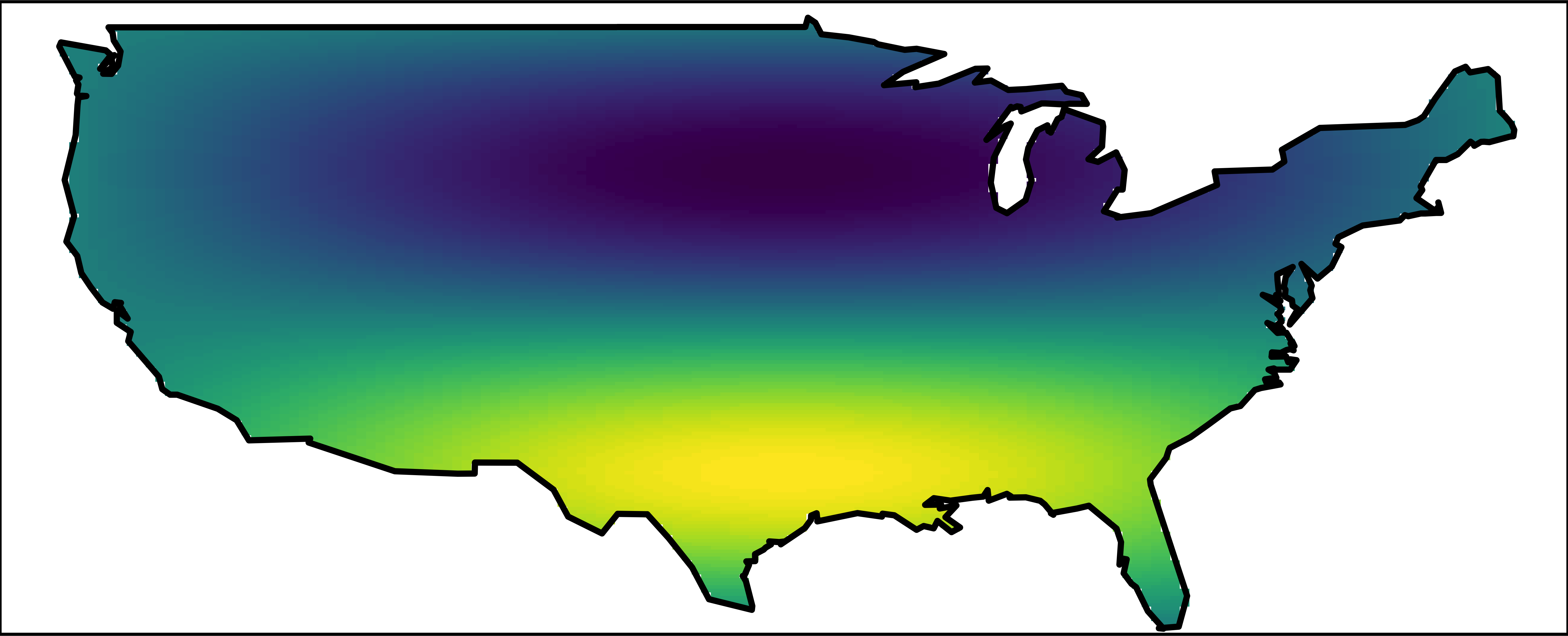}
\end{minipage}
\begin{minipage}{0.25\linewidth}
\includegraphics[width=\linewidth]{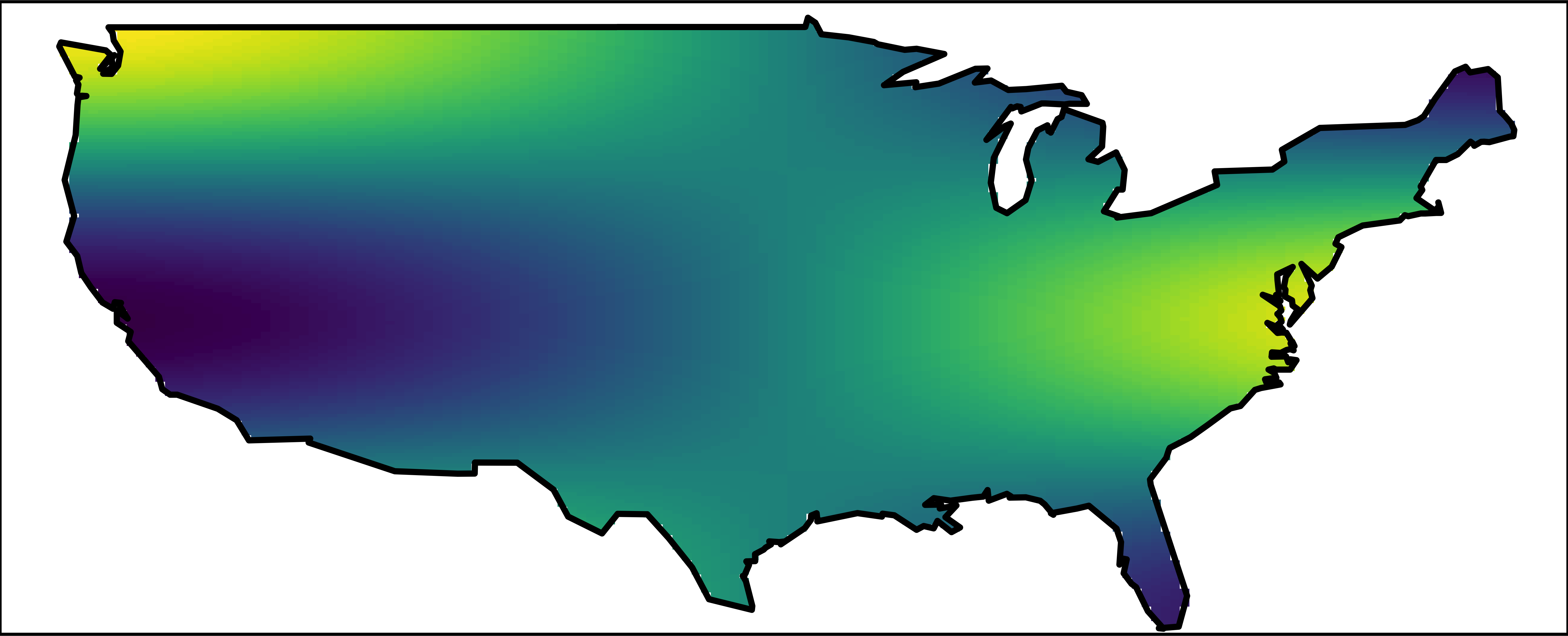}
\includegraphics[width=\linewidth]{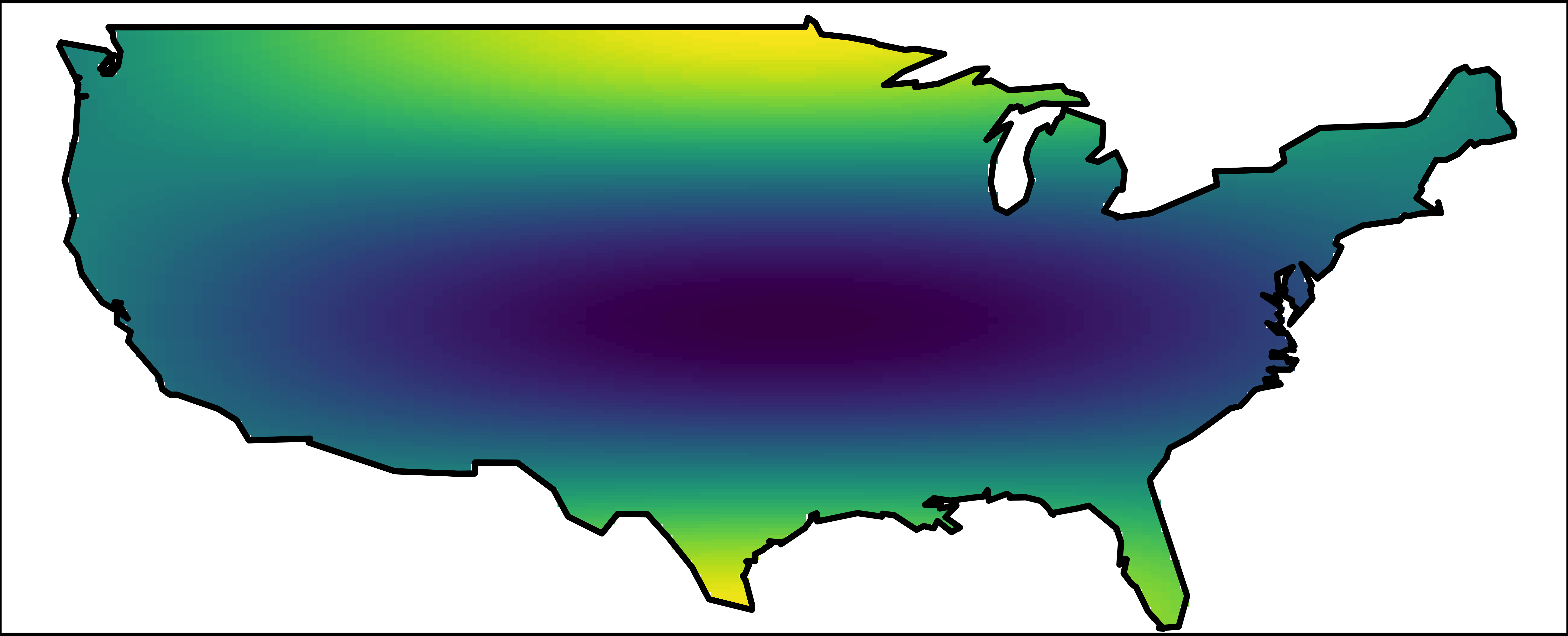}
\includegraphics[width=\linewidth]{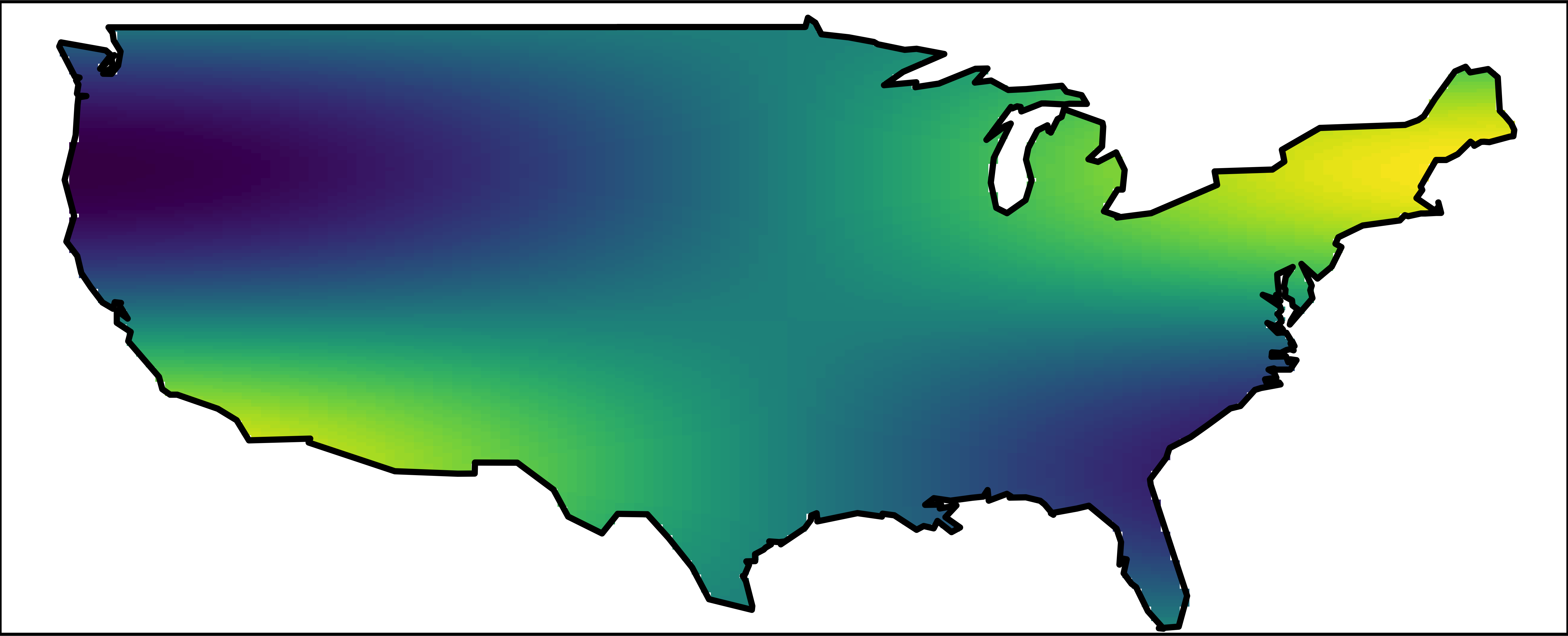}
\end{minipage}
\begin{minipage}{0.05\linewidth}
\includegraphics[width=\linewidth]{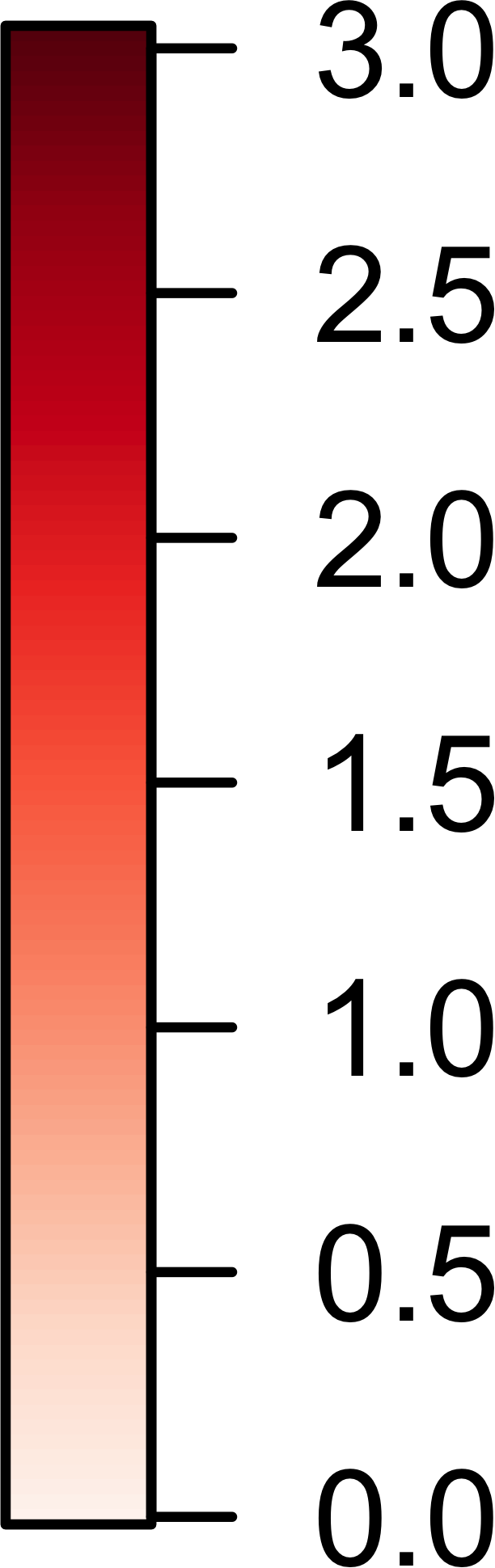}
\includegraphics[width=\linewidth]{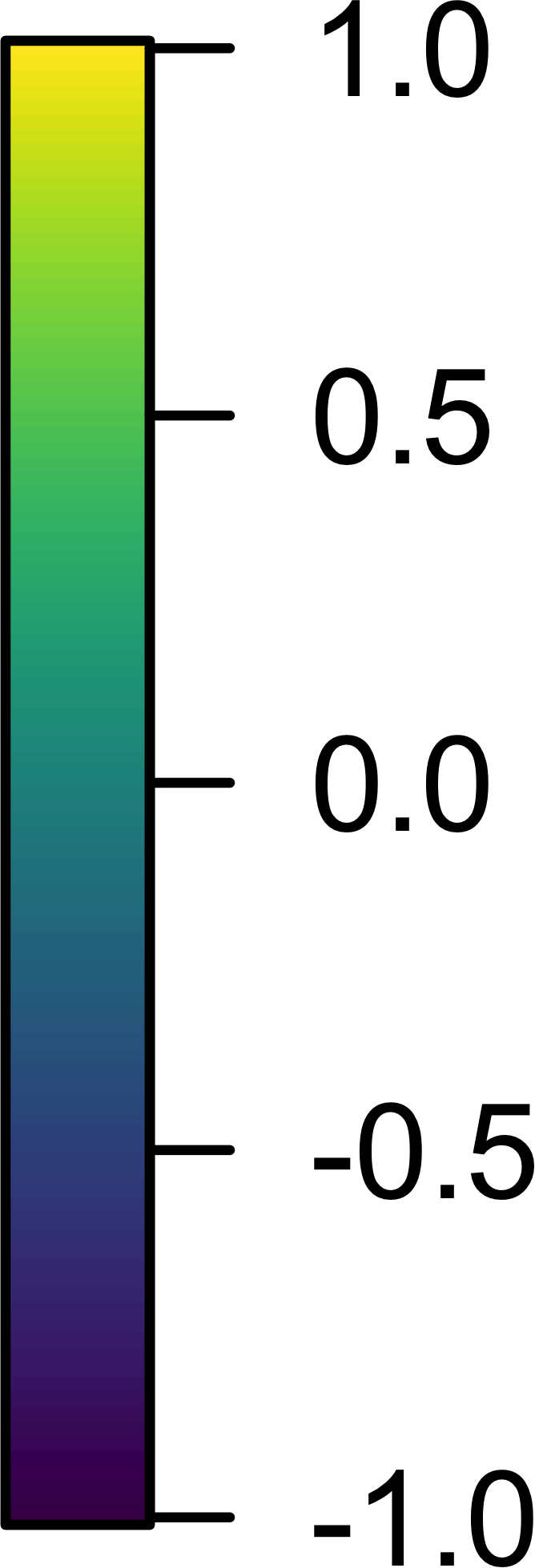}
\end{minipage}
\end{center}
\vspace{-0.2cm}
\caption{\label{fig:basis_functions}\kk{Nine basis functions 
	modeling the parameters for the non-stationary models}.}
\end{figure}

We discretize each model with respect to
the finite element mesh shown in Figure~\ref{fig:application_data},
assuming homogeneous Neumann boundary conditions.
The mesh has $5021$ nodes and was computed using R-INLA \citep{lindgren2015software}.
For the fractional models, we set $m=1$ in the rational approximation
and, for each model, the model parameters are estimated
by numerical optimization of the log-likelihood
as described in \S\ref{sec:comp}.

The log-likelihood values for the four models
can be seen in Table~\ref{tab:application}.
%
The parameter estimates for the stationary 
non-fractional ($\beta=\nu=1$) model are
$\kappa = 0.67$, $\tau = 5.44$, and $\sigma = 0.014$, 
which implies a standard deviation $\phi = 0.077$ 
and a practical range $\rho = 4.21$. 
The estimates for the fractional model 
are $\kappa = 0.20$, $\tau = 10.58$, $\sigma = 0.012$, 
and $\beta= 0.72$, corresponding 
to $\phi = 0.081$, $\rho = 9.21$, 
and a smoothness parameter $\nu = 0.44$. 
We note that the fractional model has a longer correlation range. 
\kk{This is likely to be caused 
by the non-fractional model 
underestimating the range $\rho$ 
in order to compensate for the wrong local behavior 
of the covariance function 
induced by the smoothness parameter $\nu=1$}.

\kk{Figure~\ref{fig:application_vars} shows 
the estimated marginal standard deviation $\phi(\mv{s})$  
for the two non-stationary models  
(computed using the estimates of 
the parameters for $\kappa(\mv{s})$ and $\tau(\mv{s})$) 
and $0.7$ contours of the correlation function 
for selected locations in the domain. 
The estimate of $\beta$ 
for the non-stationary fractional model is $0.723$.
Also for the non-stationary models, 
we observe a slightly longer 
practical correlation range 
$\rho(\mv{s})$ for the fractional model}. 
%

\begin{figure}[t]
\begin{center}
\begin{minipage}{0.45\linewidth}
\begin{center}
Fractional model\\
 \end{center}
\includegraphics[width=\linewidth]{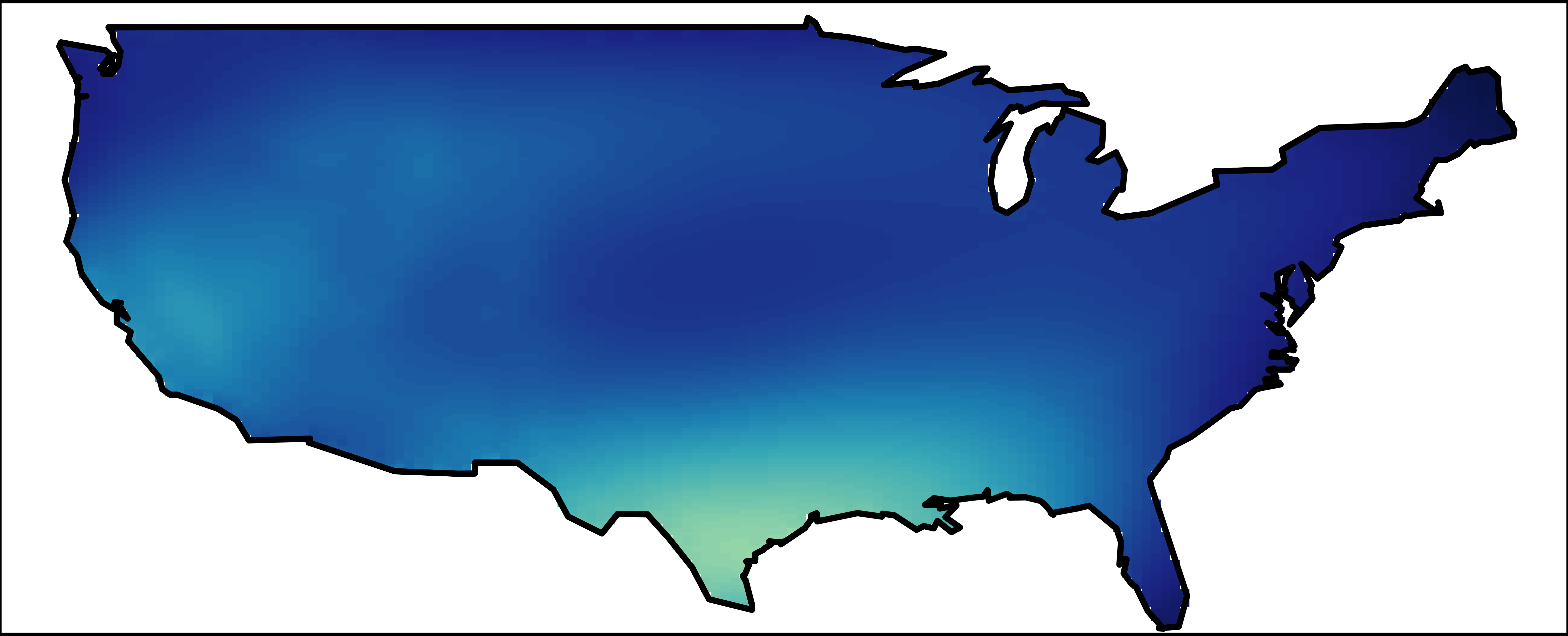}
\includegraphics[width=\linewidth]{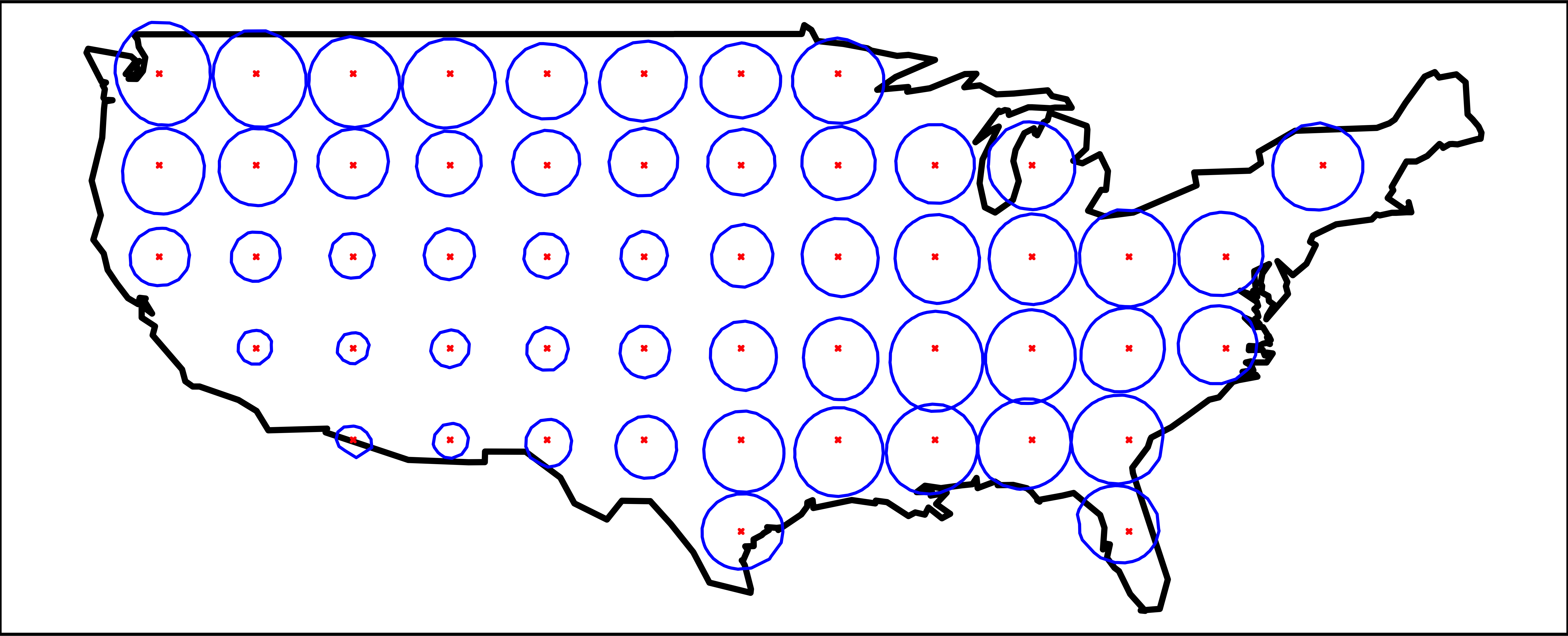}
\end{minipage}
\begin{minipage}{0.45\linewidth}
\begin{center}
$\beta=1$ model\\
\end{center}
\includegraphics[width=\linewidth]{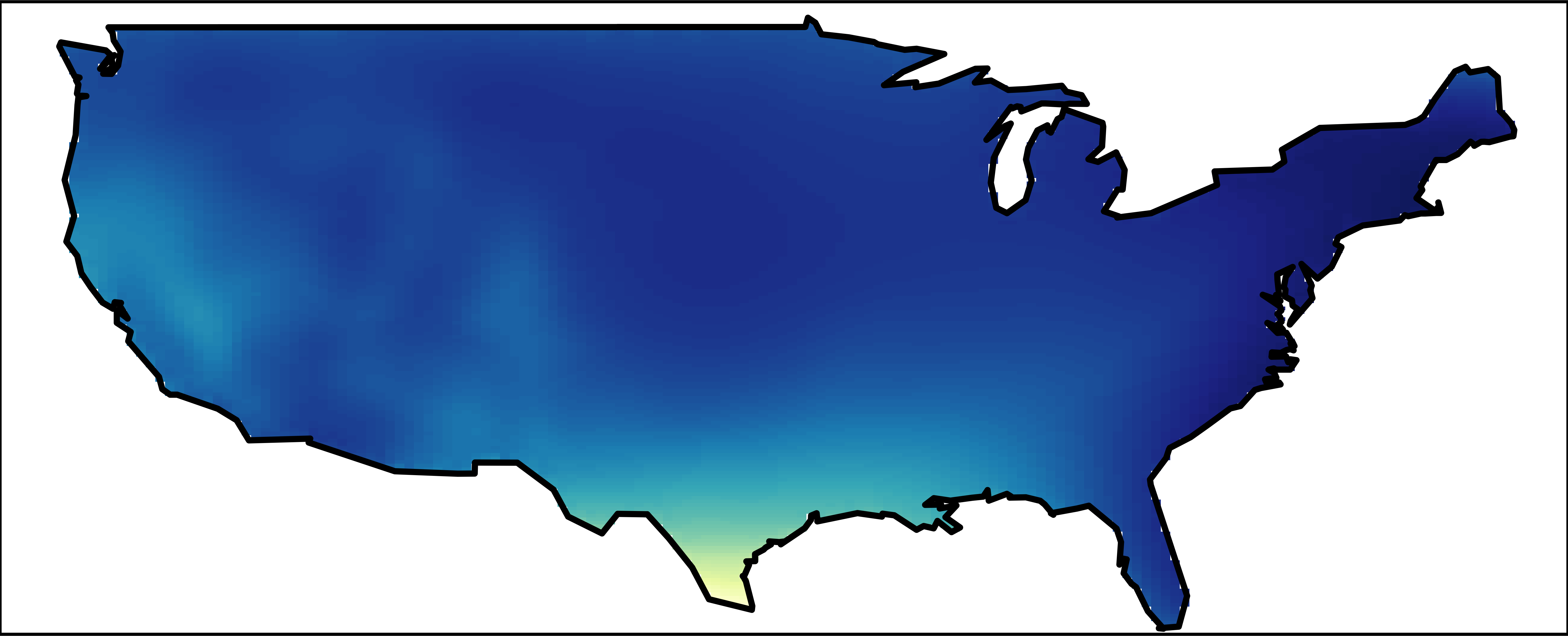}
\includegraphics[width=\linewidth]{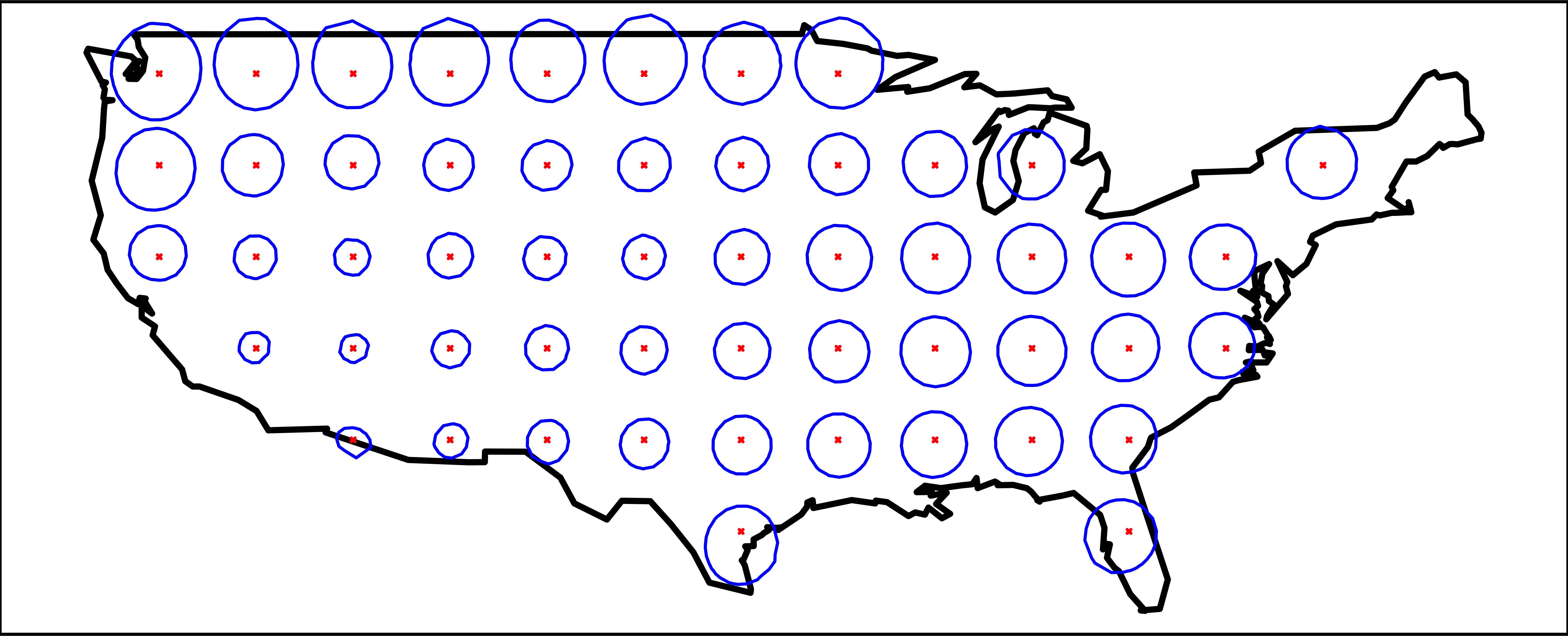}
\end{minipage}
\begin{minipage}{0.065\linewidth}
\raisebox{2cm}{\includegraphics[width=\linewidth]{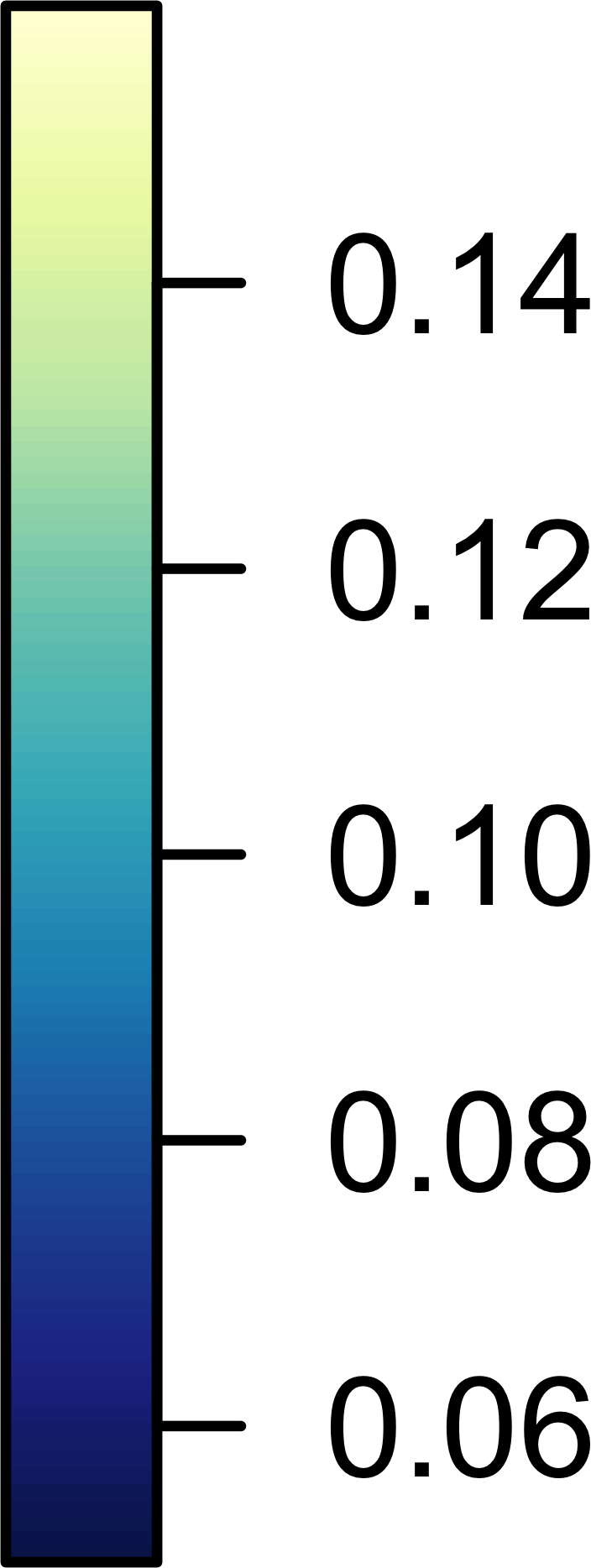}}	
\end{minipage}
\end{center}
\vspace{-0.2cm}
\caption{\label{fig:application_vars}Estimated marginal standard deviations (top row) and contours of $0.7$ correlation of the correlation function for selected locations (bottom row), for the fractional (left column) and $\beta=1$ (right column) models.}
\end{figure}

\begin{table}[t]
\centering
\begin{tabular}{lccccc}
\toprule
						& Log-likelihood 	& RMSE	& CRPS & LS & Time\\
Stationary $\beta=1$		& 219773 & 4.206 & 2.295 & 177.2 & 0.125\\
Stationary fractional		& 220255 & 4.167 & 2.274 & 178.2 & 0.412\\
Non-stationary $\beta=1$	& 225969 & 4.194 & 2.266 & 182.1 & 0.121\\
Non-stationary fractional 	& 226095 & 4.170 & 2.254 & 182.4 & 0.416\\
    \bottomrule
\end{tabular}
\vspace{0.2cm}
\caption{Model-dependent results for 
	(i) the log-likelihood, 
	(ii) the pseudo cross-validation scores 
		(RMSE, CRPS, LS, each $\times 100$) averaged over ten replicates, and 
	(iii) the computational time for one evaluation of the likelihood 
		averaged over $100$ computations.}
		\label{tab:application}
\end{table}

To investigate the predictive accuracy of the models,
a pseudo cross-validation study is performed.
We choose $10\%$ of the spatial observation locations at random,
and use the corresponding observations for each year
to predict the values at the remaining locations.
The accuracy of the four models is measured
by the root mean square error~(RMSE), 
the average continuous ranked probability score~(CRPS), 
and the average log-score (LS). 
This procedure is repeated ten times,
where in each iteration new locations are chosen at random
to base the predictions on.
\kk{The average scores for the ten iterations 
are shown in Table~\ref{tab:application}. 
Recall that low RMSE and CRPS values 
resp.\ high LS values correspond to good scores}. 

\kk{We observe} 
\db{that the predictive performance 
of the non-stationary non-fractional ($\beta=1$) model 
is similar to the stationary fractional model 
in terms of CRPS, and actually worse in terms of RMSE. 
This clearly indicates that the data should be analyzed 
by a fractional model. 
Although the non-stationary fractional model 
has a better performance in terms of CRPS and LS 
than the stationary fractional model, 
the difference is quite small 
given that the non-stationary model has $38$ parameters, 
compared to $4$ for the stationary model. 
Thus, the fractional smoothness seems to be the most important aspect for this data. 
%
The fact that the rational SPDE approach allows us 
to make these comparisons and 
\kk{to verify the smoothness parameter,  
for stationary and non-stationary models,
is one of its most important features}}.

\section{Discussion}\label{sec:discussion}
We have introduced the rational SPDE approach
providing a new type of computationally efficient approximations
for a class of Gaussian random fields.
These are based on an extension of the SPDE approach by \cite{lindgren11}
to models with general second-order differential operators of arbitrary order $\beta > d/4$
For these approximations, explicit rates of strong convergence have been derived
and we have shown how to calibrate the degree of the rational approximation
with the mesh size of the FEM to achieve these rates. \db{The results can also be combined with the results in \citep{bolin2018weak} to obtain explicit rates of weak convergence (convergence of functionals of the random field).} 

Our approach can, e.g., be used to approximate
stationary Mat\'ern fields with general smoothness, and it is also directly 
applicable
to more complicated non-stationary models, 
where the covariance function may be unknown.
A general fractional order of the differential operator
opens up for new applications of the SPDE approach,
such as to Gaussian fields with exponential covariances on $\mathbb{R}^2$.
For the Mat\'ern model and its extensions,
it furthermore facilitates likelihood-based (or Bayesian)
inference of all model parameters.
The specific structure of the approximation
then in turn enables a combination with INLA or MCMC
in situations where the Gaussian model is a part
of a more complicated non-Gaussian hierarchical model.

We have illustrated the rational SPDE approach 
for stationary and non-stationary Mat\'ern models.
A topic for future research is to apply the approach 
to other random field models in statistics 
which are difficult to approximate by GMRFs,
such as to models with long-range dependence \citep{lilly2017fractional}
based on the fractional Brownian motion.
%
Another topic for future research is to modify the
fractional SPDE approach by replacing the FEM basis
by a multiresolution basis and to compare this approach
to other multiresolution approaches such as \citep{katzfuss2017multi}.
Finally, it is also of interest to extend the method to non-Gaussian versions
of the SPDE-based Mat\'ern models \citep{wallin15},
since the Markov approximation considered by \cite{wallin15} is only computable
under the restrictive requirement $\beta \in \mathbb{N}$.

\begin{appendix}
\section{Iterated finite element method}\label{app:iter-fem}

The rational approximation $u_{h,m}^R$
of the solution $u$ to~\eqref{e:Lbeta} introduced in \S\ref{subsec:rat-approx}
is defined in terms of the discrete operators $\lop = p_\ell(L_h)$
and $\rop = p_r(L_h)$ via~\eqref{e:uhr}.
Since the differential operator $L$
\kk{in~\eqref{e:L-div}} is of
second order,
their continuous counterparts
$\lopex = p_\ell(L)$ and $\ropex=p_r(L)$ in~\eqref{e:ur} are
differential operators
of order $2(m+m_\beta)$ and $2m$, respectively.
Using a standard Galerkin approach for solving~\eqref{e:ur} would
therefore require finite element basis functions~$\{\varphi_j\}$
in the Sobolev space $H^{m+m_\beta}(\cD)$,
which are difficult to construct in more than one space dimension.
This can be avoided by using
a modified version of
the iterated Hilbert space approximation method
by \cite{lindgren11},
and in this section we give the details of this procedure.

Recall from \S\ref{subsec:discrete} that
$V_h \subset V$ is a finite element space with continuous
piecewise \kk{linear} basis functions $\{\varphi_j\}_{j=1}^{n_h}$
defined with respect to
a regular triangulation~$\cT_h$ of the domain $\overline{\cD}$
with mesh \kk{width} $h := \max_{T\in\cT_h} \operatorname{diam}(T)$.

For computing the finite element approximation, we start by factorizing
the polynomials~$q_1$ and $q_2$ in the rational approximation
$\hat{r}$ of $\hat{f}(x) = x^{\beta-m_\beta}$
in terms of their roots,
\begin{align*}
    q_1(x) = \sum_{i=1}^m c_i x^i = c_m \prod_{i=1}^{m} (x - r_{1i} )
    \quad
    \text{and}
    \quad
    q_2(x) = \sum_{j=1}^{m+1} b_j x^j = b_{m+1} \prod_{j=1}^{m+1} (x - r_{2j} ).
\end{align*}
We use these expressions to reformulate~\eqref{e:xbeta} as
\begin{align*}
    x^{-\beta} = f(x^{-1}) \approx \hat{r}(x^{-1}) x^{- m_\beta}
    = \frac{c_m \prod_{i=1}^{m} (1 - r_{1i}x )}{ b_{m+1} x^{m_\beta-1} \prod_{j=1}^{m+1} (1 - r_{2i}x ) },
\end{align*}
where, again, we have expanded the fraction with $x^m$.
This representation shows that we can equivalently
define the rational SPDE approximation $u_{h,m}^R$
as the solution to~\eqref{e:uhr}
with $\lop,\rop$ redefined as
$\lop = b_{m+1} L_h^{m_\beta-1} \prod_{j=1}^{m+1} (\kk{\mathrm{Id}_{h}} - r_{2j} L_h)$ 
and 
$\rop = c_m \prod_{i=1}^{m} ( \kk{\mathrm{Id}_{h}} - r_{1i} L_h )$, 
\kk{where $\mathrm{Id}_{h}$ denotes the identity on $V_h$}. 

We use the formulation of \eqref{e:uhr}
as a system outlined in \eqref{e:nested-discrete}:
First we solve $\lop x_{h,m} = \white_h$
and we then compute 
$u_{h,m}^R = \rop x_{h,m}$.
To this end,
we define the functions $x_k \in L_2(\Omega;V_h)$ 
for $k\in\{1,\ldots,m+m_\beta\}$ 
iteratively by
\begin{align*}
    b_{m+1} ( \kk{\mathrm{Id}_{h}} - r_{21} L_h ) x_1 &= \white_h, \\
    ( \kk{\mathrm{Id}_{h}} - r_{2k} L_h ) x_k &= x_{k-1}, \qquad k = 2,\ldots,m+1, \\
    L_h x_k &= x_{k-1}, \qquad k = m+2,\ldots,m+m_\beta, \quad \mbox{if $m_{\beta}\geq2$,}
\end{align*}
\kk{noting} that $x_{m+m_\beta} = x_{h,m}$.

By 
\kk{recalling the bilinear form $a_L$ from~\eqref{e:a-L} 
and}
expanding $x_k = \sum_{j=1}^{n_h} x_{kj} \varphi_j$
with respect to the finite element basis,
we find that the stochastic weights
\kk{$\mv{x}_{k} = (x_{k1}, \ldots, x_{k{n_h}})^{\trsp}$
satisfy
{\allowdisplaybreaks
\begin{gather*}
    \sum_{j=1}^{n_h} x_{1j} \, b_{m+1} 
    \left(\scalar{\varphi_j, \varphi_i}{L_2(\cD)} - r_{21} \, a_L(\varphi_j, \varphi_i) \right) 
    	= \scalar{\white_h, \varphi_i}{L_2(\cD)},  \\
    \sum_{j=1}^{n_h} x_{kj} 
    \left( \scalar{\varphi_j, \varphi_i}{L_2(\cD)} - r_{2k} \, a_L(\varphi_j, \varphi_i) \right)
        = \sum_{j=1}^{n_h} x_{k-1,j} \, \scalar{\varphi_j, \varphi_i}{L_2(\cD)},
     \,\, 2\leq k \leq m+1,
    \\
    \sum_{j=1}^{n_h} x_{kj} \,  
    a_L(\varphi_j, \varphi_i) 
        = \sum_{j=1}^{n_h} x_{k-1,j} \, \scalar{ \varphi_j, \varphi_i}{L_2(\cD)},
    \quad 
k = m+2,\ldots,m+m_\beta,
\end{gather*}}
where} each of these equations hold for $i = 1,\ldots, n_h$.
Recall from \kk{\S\ref{subsec:discrete}} that
$\white_h$ is white noise in $V_h$.
This entails the distribution 
\kk{$\bigl(\scalar{\white_h, \varphi_i}{L_2(\cD)} \bigr)_{i=1}^{n_h} \sim \pN(\mv{0},\mv{C})$},
where $\mv{C}$ is the mass matrix with elements
$C_{ij} = \scalar{\varphi_j, \varphi_i}{L_2(\cD)}$
and, therefore, 
\kk{$\mv{x}_k \sim 
	\pN\bigl(\mv{0}, 
		\mv{P}_{\ell,k}^{-1} \mv{C} \mv{P}_{\ell,k}^{-\trsp} 
	\bigr)$} 
for every $k \in\{ 1,...,m+m_\beta\}$.
Here, the matrix \kk{$\mv{P}_{\ell,k}$ is defined by
\begin{align*}
    \mv{P}_{\ell,k} = 
    	\begin{cases}
            b_{m+1} \mv{C} \, \mv{L}_{k}, 
            & k=1,\ldots, m+1, \\
            b_{m+1} \mv{C} \left( \mv{C}^{-1} \mv{L} \right)^{k-m-1} 
            	\mv{L}_{m+1}, 
            	& k = m+2, \ldots , m+m_\beta,
        \end{cases}
\end{align*}
where $\mv{L}_{k} 
:= \prod_{j=1}^{k} \left( \mv{I} - r_{2j}\mv{C}^{-1}\mv{L} \right)$, 
with identity matrix $\mv{I} \in \bbR^{n_h \times n_h}$}, 
and the entries of $\mv{L}$ are given \kk{by 
\begin{align*}
	L_{ij} 
	:= 
	a_L(\varphi_j, \varphi_i) 
	= 
	 \scalar{\mv{H} \nabla\varphi_j, \nabla\varphi_i}{L_2(\cD)}
	+ 
	\left( \kappa^2 \varphi_j, \varphi_i \right)_{L_2(\cD)}, 
	\qquad 
	i,j = 1, \ldots, n_h,  	
\end{align*}
cf.~\eqref{e:L-div}--\eqref{e:a-L}}. 
In particular, the weights $\mv{x}$ of $x_{h,m}$ have distribution
\begin{align}\label{e:xdistribution}
    \mv{x} \sim 
    \pN\left( \mv{0}, \mv{P}_{\ell}^{-1} \mv{C} \mv{P}_{\ell}^{-\trsp} \right),
    \qquad
    \text{where}
    \qquad
    \mv{P}_{\ell} := \mv{P}_{\ell,m+m_\beta}.
\end{align}
Note also that for the Mat\'ern \kk{case, i.e., $L = \kappa^{2} - \Delta$},
we have $\mv{L} = \kappa^{2}\mv{C} + \mv{G}$,
where $\mv{G}$ is the stiffness matrix with elements
$G_{ij} = \scalar{\nabla\varphi_j, \nabla\varphi_i}{L_2(\cD)}$.

To calculate the final approximation
$u_{h,m}^R = \rop x_{h,m}$,  
we apply a similar iterative procedure.
Let $u_1,\ldots,u_m$ be defined by
\begin{align*}
    u_1 &= c_m (\kk{\mathrm{Id}_h} - r_{11}L_h) x_{h,m}, \\
    u_k &= (\kk{\mathrm{Id}_h} - r_{1k}L_h) u_{k-1}, \qquad\quad k = 2,\ldots,m.
\end{align*}
Then $u_{h,m}^R = c_m \bigl(\prod_{i=1}^m (\mathrm{Id} - r_{1i} L_h) \bigr)x_{h,m} = u_m$
and the weights $\mv{u}_{k}$ of $u_k$
can be obtained from the weights of $x_{h,m}$ via
$\mv{u}_{k} = \mv{P}_{r,k} \, \mv{x}$,
where 
$$
\mv{P}_{r,k} 
:= c_m \prod_{i=1}^{k} \left( \mv{I} - r_{1i} \mv{C}^{-1} \mv{L} \right).
$$
By~\eqref{e:xdistribution}, the distribution
of the weights $\mv{u}$ of the final rational approximation $u_{h,m}^R$
is thus given by
\begin{align*}
    \mv{u} \sim 
    \pN\left( \mv{0}, 
    	\mv{P}_{r} \mv{P}_{\ell}^{-1} \mv{C} 
    		\mv{P}_{\ell}^{-\trsp} \mv{P}_{r}^{\trsp} \right),
    \qquad
    \text{where}
    \qquad
    \mv{P}_{r} := \mv{P}_{r,m}.
\end{align*}

To obtain sparse matrices $\mv{P}_{\ell}$ 
and $\mv{P}_{r}$,
we approximate the mass matrix $\mv{C}$ by a diagonal matrix
$\widetilde{\mv{C}}$ with diagonal elements
$\widetilde{C}_{ii} = \sum_{j=1}^{n_h} C_{ij}$.
The effect of this ``mass lumping''
was motivated theoretically by \cite{lindgren11},
and was empirically shown to be small by \cite{bolin13comparison}.

\section{Convergence analysis}\label{app:convergence}
In this section we give the details of the convergence result
stated in Theorem~\ref{thm:strong}.
As mentioned in \S\ref{subsec:error},
we choose $\hat{r}=\hat{r}_h$ as the $L_\infty$-best
rational approximation of $\hat{f}(x) = x^{\beta - m_\beta}$
on the interval $J_h$ for each $h$. We furthermore assume that
the operator $L$ \kk{in~\eqref{e:L-div}} 
is normalized such that $\lambda_1 \geq 1$
and, thus, $J_h \subset J \subset [0,1]$.

Recall that Proposition~\ref{prop:uh}
provides a bound for $\norm{u - u_h}{L_2(\Omega; L_2(\cD))}$.
Therefore, it remains is to estimate the strong error
between $u_{h,m}^R$ and $u_h$
induced by the rational approximation of $f(x) = x^\beta$.
To this end, recall the construction
of the rational approximation $u_{h,m}^R$ 
\kk{from} \S\ref{subsec:rat-approx}:
We first decomposed $f$
as $f(x) = \hat{f}(x) x^{m_\beta}$,
where $\hat{f}(x) = x^{\beta-m_\beta}$, and
then used  a rational approximation
$\hat{r} = \frac{q_1}{q_2}$ of $\hat{f}$
on the interval $J_h= \bigl[ \lambda_{n_h,h}^{-1}, \lambda_{1,h}^{-1} \bigr]$
with $q_1 \in \cP^m(J_h)$ and $q_2 \in \cP^{m+1}(J_h)$
to define the approximation $r(x) := \hat{r}(x) x^{m_\beta}$ of $f$.
Here, $\cP^m(J_h)$ denotes the set of polynomials
$q\from J_h \to \bbR$ of degree $\deg(q) = m$.
In the following, we assume that $\hat{r} = \hat{r}_h$ is the best rational approximation
of $\hat{f}$ of this form, i.e.,
\begin{align*}
    \norm{\hat{f} - \hat{r}_h}{C(J_h)}
    =
    \inf\left\{
    \norm{\hat{f} - \hat{\rho}}{C(J_h)}
    :
    \hat{\rho} = \tfrac{q_1}{q_2}, \
    q_1\in\cP^m(J_h), \
    q_2\in \cP^{m+1}(J_h) \right\},
\end{align*}
where $\norm{g}{C(J)} := \sup_{x\in J} |g(x)|$.

For the analysis, we treat the two cases
$\beta\in(0,1)$ and $\beta \geq 1$ separately.
If $\beta \geq 1$, then $\hat{\beta} := \beta - m_\beta \in [0,1)$.
Thus, if $\hat{r}_{*}$ denotes the best rational approximation
of~$\hat{f}$ on the interval $[0,1]$,
we find
\citep[][Theorem~1]{stahl2003rational}
\begin{align*}
    \norm{ \hat{f} - \hat{r}_h }{C(J_h)}
    \leq \sup_{x\in[0,1]} | \hat{f}(x) - \hat{r}_{*}(x) |
        \leq \hat{C} e^{-2\pi \sqrt{\hat{\beta}m}},
\end{align*}
where the constant $\hat{C}>0$ is continuous in $\hat{\beta}$ and independent of $h$ and the degree~$m$.
Since $x^{m_\beta} \leq 1$ for all $x\in J_h$, we obtain for $r_h(x) := \hat{r}_h(x) x^{m_\beta}$
the same bound,
\begin{align}\label{e:r-beta>1}
    \norm{ f - r_h }{C(J_h)}
    \leq \sup_{x\in J_h} | \hat{f}(x) - \hat{r}_h(x) |
        \leq \hat{C} e^{-2\pi \sqrt{\hat{\beta}m}}.
\end{align}

If $\beta\in(0,1)$, then $\hat{\beta} \in (-1,0)$ and
we let $\widetilde{r}$
be the best approximation of $\widetilde{f}(x) := x^{|\hat{\beta}|}$
on $[0,1]$.
A rational approximation of $\widetilde{f}$
on the different interval
$\widetilde{J}_h := [\lambda_{1,h},\lambda_{n_h,h}]$
is then given by
$\widetilde{R}_h(\widetilde{x}) := \lambda_{n_h,h}^{|\hat{\beta}|} \widetilde{r}(\lambda_{n_h,h}^{-1} \widetilde{x})$
with error 
\begin{align*}
    \sup_{\widetilde{x}\in \widetilde{J}_h} | \widetilde{f}(\widetilde{x}) - \widetilde{R}_h(\widetilde{x}) |
        \leq \lambda_{n_h,h}^{|\hat{\beta}|} \sup_{x\in[0,1]} | \widetilde{f}(x) - \widetilde{r}(x) |
        \leq \widetilde{C} \lambda_{n_h,h}^{|\hat{\beta}|} e^{-2\pi \sqrt{|\hat{\beta}|m}},
\end{align*}
where the constant $\widetilde{C} > 0$ depends only on $|\hat{\beta}|$.
On $J_h = \bigl[ \lambda_{n_h,h}^{-1}, \lambda_{1,h}^{-1} \bigr]$
the function $\widetilde{R}_h(x^{-1})$ is
an approximation of $\hat{f}(x) = x^{\hat{\beta}} = \widetilde{f}(x^{-1})$ and
\begin{align*}
    \norm{\hat{f} - \hat{r}_h}{C(J_h)}
    \leq \sup_{x\in J_h} | \hat{f}(x) - \widetilde{R}_h(x^{-1}) | 
    \leq \sup_{\widetilde{x} \in \widetilde{J}_h} | \widetilde{f}(\widetilde{x}) - \widetilde{R}_h(\widetilde{x}) |
    \leq \widetilde{C} \lambda_{n_h,h}^{|\hat{\beta}|} e^{-2\pi \sqrt{|\hat{\beta}|m}}.
\end{align*}
Finally, we use again the estimate $x^{m_\beta} \leq 1$ on $J_h$
to derive 
\begin{align}\label{e:r-beta<1}
    \norm{ f - r_h }{C(J_h)}
    \leq \norm{\hat{f} - \hat{r}_h}{C(J_h)}
        \leq \widetilde{C} \lambda_{n_h,h}^{|\hat{\beta}|} e^{-2\pi \sqrt{|\hat{\beta}|m}}.
\end{align}

Proposition~\ref{prop:uh} and the estimates \eqref{e:r-beta>1}--\eqref{e:r-beta<1}
yield Theorem~\ref{thm:strong}, which is proven below.

\begin{proof}[Proof of Theorem~\ref{thm:strong}]
By Proposition~\ref{prop:uh}, it suffices
to bound $\bbE \norm{u_h - u_{h,m}^R}{L_2(\cD)}^2$.
To this end, let $\white_h = \sum_{j=1}^{n_h} \xi_j e_{j,h}$
be a Karhunen--Lo\`{e}ve expansion of $\white_h$, 
where $\{e_{j,h}\}_{j=1}^{n_h}$ are $L_2(\cD)$-orthonormal eigenvectors of $L_h$ 
corresponding to the eigenvalues
$\{\lambda_{j,h}\}_{j=1}^{n_h}$ and $\xi_j \sim \pN(0,1)$ i.i.d..

By construction and owing to boundedness and invertibility of $L_h$,
we have for~$u_{h,m}^R$ in~\eqref{e:uhr}
that $u_{h,m}^R = \lop^{-1} \rop \white_h = r_h(L_h^{-1}) \white_h$
and we estimate
\begin{align*}
    \bbE \norm{u_h - u_{h,m}^R}{L_2(\cD)}^2
        &=
        \bbE \sum_{j=1}^{n_h} \xi_j^2 \left( \lambda_{j,h}^{-\beta} - r_h(\lambda_{j,h}^{-1}) \right)^2
        \leq 
        n_h \max_{1\leq j \leq n_h} \bigl| \lambda_{j,h}^{-\beta} - r_h(\lambda_{j,h}^{-1}) \bigr|^2.
\end{align*}
By~\eqref{e:r-beta>1} and~\eqref{e:r-beta<1}, we can bound the last term \kk{by
\begin{align*}
\max_{1\leq j \leq n_h} \bigl| \lambda_{j,h}^{-\beta} - r_h(\lambda_{j,h}^{-1}) \bigr|^2
    &\leq  
    \left( \sup_{x\in J_h}
     |f(x) - r_h(x)| \right)^2 
    \lesssim  
    \lambda_{n_h,h}^{2\max\{(1 - \beta),0\}} e^{-4\pi \sqrt{|\beta-m_\beta| m}} .
\end{align*}
By \cite[Theorem~6.1]{strang2008} 
we have $\lambda_{n_h,h} \lesssim \lambda_{n_h} \lesssim n_h^{2/d}$, 
for sufficiently small $h\in(0,1)$, 
where the last bound follows 
from the Weyl asymptotic~\eqref{e:lambdaj}. 
Finally, $n_h\lesssim h^{-d}$ 
by quasi-uniformity of the triangulation~$\cT_h$. 
Thus, we conclude 
\begin{align*}
    \bbE \norm{u_h - u_{h,m}^R}{L_2(\cD)}^2
        &\lesssim h^{-4\max\{(1 - \beta),\, 0\} - d} e^{-4\pi \sqrt{|\beta-m_\beta| m}},
\end{align*}
which combined with Proposition~\ref{prop:uh} proves Theorem~\ref{thm:strong}}.
\end{proof}

\section{A comparison to the quadrature approach}\label{subsec:rat-comparequad}

\cite{bolin2017numerical} proposed another method
which can be applied to
\kk{simulate} the solution~$u$ to \eqref{e:Lbeta}
numerically.
The approach therein is
to express the discretized equation \eqref{e:uh}
as $L_h^{\tilde{\beta}}L_h^{\lfloor \beta \rfloor} u_h = \white_h$,
where $\tilde{\beta} = \beta-\lfloor\beta\rfloor \in [0,1)$. 
Since $L_h^{\lfloor \beta \rfloor} u_h = f$ can be solved
by using non-fractional methods,
the focus was on the fractional case $\beta\in(0,1)$
when constructing the approximative solution.
From the Dunford--Taylor calculus \citep[{\S}IX.11]{yosida1995}
one has in this case the following representation
of the discrete inverse,
\begin{align*}
    L_h^{-\beta}
        &=
        \frac{\sin(\pi\beta)}{\pi}
        \int_0^{\infty} \lambda^{-\beta} 
        	\left(\lambda \, \kk{\mathrm{Id}_h} + L_h \right)^{-1} \, \rd \lambda.
\end{align*}
\cite{bonito2015} introduced a quadrature approximation $Q_{h,k}^\beta$
of this integral
after a change of variables $\lambda = e^{-2y}$
and based on an equidistant grid
%
for $y$ with step size~$k>0$, \kk{i.e.,
\begin{align*}
    Q_{h,k}^\beta := \frac{2 k \sin(\pi\beta)}{\pi} 
    \sum_{j=-K^{-}}^{K^{+}} e^{2\beta y_j} 
    	\left( \kk{\mathrm{Id}_h} + e^{2 y_j} L_h \right)^{-1}, 
    \quad\text{where}\quad  
    y_j := j k.
\end{align*}
Exponential convergence
of order $\cO\bigl( e^{-\pi^2/(2k)} \bigr)$}
of the operator $Q_{h,k}^\beta$
to the discrete fractional inverse $L_h^{-\beta}$
was proven for
$K^{-} := \left\lceil \frac{\pi^2}{4\beta k^2} \right\rceil$ 
and 
$K^{+} := \left\lceil \frac{\pi^2}{4(1-\beta) k^2} \right\rceil$. 

By calibrating the number of quadrature nodes with the
number of basis functions in the FEM, 
an explicit rate of convergence for the strong error 
of the approximation
$u_{h,k}^Q = Q_{h,k}^\beta \white_h$
%
%
was derived \citep[][Theorem 2.10]{bolin2017numerical}.
Motivated by the asymptotic convergence of the method,
\kk{it was suggested to choose
$k\leq - \tfrac{\pi^2}{4\beta\ln(h)}$
in order to balance the errors 
induced by the quadrature and by a FEM of mesh size $h$ 
\citep[][Table~1]{bolin2017numerical}. 
This corresponds} to a total number of
$K = K^- + K^+ + 1 > \tfrac{4\beta\ln(h)^2}{\pi^2(1-\beta)}$
quadrature nodes.
The analogous result for the degree $m$
of the approximation $u_{h,m}^R$ is given
in Remark~\ref{rem:calibrate-h-m},
suggesting the lower bound $m \geq \tfrac{\ln(h)^2}{\pi^2 (1-\beta)}$,
i.e., $K = 4\beta m$ asymptotically.

Furthermore, if we let \kk{$c_j := e^{2 y_j}$ and
\begin{align*}
    \lop^Q := \prod_{j=-K^{-}}^{K^{+}} c_j^{-\beta} \left( \mathrm{Id}_h + c_j L_h \right),
    \quad
    \rop^Q := \frac{2 k \sin(\pi\beta)}{\pi} 
    \sum_{i=-K^{-}}^{K^{+}} \prod_{j \neq i} c_j^{-\beta} 
    	\left( \mathrm{Id}_h + c_j L_h \right),
\end{align*}
we} find that the quadrature-based
approximation $u_{h,k}^Q$
can equivalently be defined
as the solution to the non-fractional SPDE
\begin{align}\label{e:quad-rat}
    \lop^Q u_{h,k}^Q  = \rop^{Q} \white_h
    \quad
    \text{in } \cD.
\end{align}

\begin{remark}\label{rem:quad}
A comparison of~\eqref{e:quad-rat} with~\eqref{e:uhr}
illustrates that $u_{h,k}^Q$
can be seen as a rational approximation
of degree $K^{-} + K^{+}$,
where the specific choice of the coefficients
is implied by the quadrature.
In combination with the remark above
that $K=4\beta m$ quadrature nodes are needed
to balance the errors,
this shows that the computational cost
for achieving a given accuracy
with the rational approximation from \S\ref{subsec:rat-approx}
is lower than with the quadrature method,
since $\beta > d/4$.
\end{remark}

\section{Parameter identifiability}\label{sec:measureproof}
This section contains the proof of Theorem~\ref{thm:measure}. For the proof, we will use the Feldman--H\'ajek theorem which we restate here from \citep[Theorem 2.25]{DaPrato2014} for convenience. 
%
\begin{theorem}[Felman--H\'ajek]\label{thm:stuart}
Two Gaussian measures $\mu_1 = \pN(m_1,\mathcal{C}_1)$ and $\mu_2 = \pN(m_2,\mathcal{C}_2)$ on a Hilbert space~$\mathcal{H}$ 
are either singular or equivalent. 
They are equivalent if and only if the following three conditions 
are satisfied:
\begin{enumerate}[label = \normalfont\Roman*.]
\item\label{stuart-1} $\operatorname{Im}\bigl(\mathcal{C}_1^{1/2}\bigr) 
	 = \operatorname{Im}\bigl(\mathcal{C}_2^{1/2}\bigr) := E$,
\item\label{stuart-2} $m_1-m_2 \in E$,
\item\label{stuart-3} the operator $T:= 
\bigl(\mathcal{C}_1^{-1/2}\mathcal{C}_2^{1/2}\bigr)
\bigl(\mathcal{C}_1^{-1/2}\mathcal{C}_2^{1/2}\bigr)^* - I$ 
is Hilbert-Schmidt in $\bar{E}$, where $*$ denotes the $\cH$-adjoint operator, 
and $I$ the identity on $\cH$.  
\end{enumerate}
\end{theorem}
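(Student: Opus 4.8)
The plan is to prove this as the classical Feldman--H\'ajek theorem, by reducing the infinite-dimensional comparison to a countable product of one-dimensional Gaussians and invoking Kakutani's dichotomy for product measures. First I would separate the mean part from the covariance part. By the Cameron--Martin theorem, for a centred measure $\pN(0,\mathcal{C})$ the shifted measure $\pN(0,\mathcal{C})(\,\cdot - h)$ is equivalent to $\pN(0,\mathcal{C})$ precisely when $h\in\operatorname{Im}(\mathcal{C}^{1/2})$, with an explicit Radon--Nikodym density. Using this together with the chain rule for densities, I would reduce the general statement $\mu_1\sim\mu_2$ to the conjunction of (a) equivalence of the centred measures $\pN(0,\mathcal{C}_1)\sim\pN(0,\mathcal{C}_2)$, and (b) the shift $m_1-m_2$ lying in the common Cameron--Martin space, which is exactly condition \ref{stuart-2} once \ref{stuart-1} holds.

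For the centred covariance problem I would set up a simultaneous diagonalisation. Under condition \ref{stuart-1} the operator $R:=\mathcal{C}_1^{-1/2}\mathcal{C}_2^{1/2}$ is everywhere defined and closed, hence bounded by the closed graph theorem, so $T=RR^*-I$ is a well-defined bounded self-adjoint operator on $\bar{E}$. For the sufficiency direction I would assume $T$ Hilbert--Schmidt: then $T$ is compact, so it admits an orthonormal eigenbasis $\{e_n\}$ with eigenvalues $\{t_n\}$ obeying $\sum_n t_n^2<\infty$, and in the coordinates dual to this basis the two laws become a countable product of one-dimensional Gaussians whose $n$th variance ratio is $1+t_n$. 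The Hellinger affinity of the $n$th factor is an explicit elementary function of $t_n$ (and, for the mean, of the $n$th coordinate of $m_1-m_2$), and Kakutani's criterion---the infinite product of affinities is positive iff the associated series converges---reduces precisely to $\sum_n t_n^2<\infty$ together with $m_1-m_2\in\operatorname{Im}(\mathcal{C}_1^{1/2})$, yielding equivalence.

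For the dichotomy and the necessity direction I would argue in reverse using the same reduction. The dichotomy itself is inherited from Kakutani's theorem: after diagonalisation the two measures form a product of mutually equivalent one-dimensional factors, and Kakutani's theorem already asserts that such a product is either equivalent or mutually singular, with no intermediate case. Necessity of \ref{stuart-1} follows because the Cameron--Martin space is an equivalence invariant---$\operatorname{Im}(\mathcal{C}_i^{1/2})$ is characterised intrinsically as the set of admissible shifts of $\mu_i$---so $\mu_1\sim\mu_2$ forces the two spaces to coincide. Granted \ref{stuart-1}, convergence of the Kakutani series is necessary for equivalence, and this is exactly the Hilbert--Schmidt condition \ref{stuart-3} together with the mean condition \ref{stuart-2}.

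The main obstacle is the simultaneous diagonalisation when $\mathcal{C}_1$ and $\mathcal{C}_2$ do not commute: one cannot diagonalise both covariances in a single basis, and the product structure must instead be extracted from the self-adjoint operator $T$ on $\bar{E}$. Making this rigorous requires care on three points---verifying boundedness of $R$ through the closed graph theorem under \ref{stuart-1}; establishing that equivalence \emph{forces} $T$ to be compact, so that an eigenbasis exists, rather than assuming compactness; and confirming that passing to the eigenbasis genuinely decouples the finite-dimensional marginals into an independent product, so that the elementary one-dimensional affinity computation and Kakutani's criterion apply verbatim.
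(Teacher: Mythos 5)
The paper does not actually prove this statement: it is the Feldman--H\'ajek theorem, quoted verbatim from \citet{stuart2010} solely as an ingredient for the proof of Theorem~\ref{thm:measure} in Appendix~\ref{sec:measureproof}, so there is no internal proof to compare against. Your proposal must therefore be judged against the classical literature, and the route you describe---Cameron--Martin reduction of the means, simultaneous diagonalisation of the centred covariances through the self-adjoint operator $T=RR^{*}-I$ with $R=\mathcal{C}_1^{-1/2}\mathcal{C}_2^{1/2}$, and Kakutani's dichotomy for the resulting product of one-dimensional Gaussians---is precisely the standard proof (as in Da Prato--Zabczyk or Bogachev). The sufficiency half of your sketch is sound: boundedness of $R$ via the closed graph theorem under condition I, the eigenbasis $\{e_n\}$ of the compact operator $T$ with eigenvalues $t_n$, decoupling of the coordinates under both measures because the covariance of the coordinate pair under $\mu_2$ is $\langle RR^{*}e_m, e_n\rangle = (1+t_n)\delta_{mn}$, and the elementary one-dimensional Hellinger computation feeding Kakutani's criterion.

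There is, however, one genuine gap: both the dichotomy and the necessity of condition III are made to rest on ``the same reduction,'' but the reduction presupposes what is to be proved. Kakutani's theorem applies only after the two measures have been exhibited as a countable product, and your product representation requires condition I to define $R$ and compactness of $T$ to obtain the eigenbasis. You correctly flag ``equivalence forces $T$ to be compact'' as a point requiring care, but flagging is not supplying the mechanism; and for the dichotomy the situation is worse---when condition I fails you must show the measures are \emph{singular}, not merely inequivalent (your Cameron--Martin-invariance argument only yields the contrapositive of necessity), and in that case no product structure exists at all. The standard repairs, absent from your sketch, are: (i) work with finite-dimensional projections and the Hellinger integral, which is the infimum of the Hellinger integrals of finite-dimensional marginals; non-singularity means this infimum is positive, which uniformly bounds the Kakutani sums of every finite section and hence the Hilbert--Schmidt norms of the finite-rank compressions of $T$, forcing $T$ to be Hilbert--Schmidt (in particular compact) rather than assuming it, while the value zero yields singularity and hence the dichotomy; and (ii) when condition I fails, exhibit singularity directly, e.g.\ via a sequence of measurable linear functionals whose variance ratios blow up, so that a strong-law event has $\mu_1$-probability one and $\mu_2$-probability zero. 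A smaller technical omission: the ``coordinates dual to the eigenbasis,'' $\xi_n=\langle\,\cdot\,,\mathcal{C}_1^{-1/2}e_n\rangle$, are not genuine inner products, since $\mathcal{C}_1^{-1/2}e_n$ need not lie in $\mathcal{H}$; they must be constructed as Paley--Wiener (measurable linear) functionals, and one must verify that they generate the $\sigma$-algebra on $\bar{E}$ and that the supports of the two measures coincide, since otherwise singularity is immediate.
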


\begin{proof}[Proof of Theorem~\ref{thm:measure}]
Since the two Gaussian measures have the same mean, 
we only have to verify conditions~\ref{stuart-1} and~\ref{stuart-3} 
of Theorem~\ref{thm:stuart}.

We first prove that condition~\ref{stuart-1} 
can hold only if $\beta_1=\beta_2$. 
To this end, we use the equivalence of 
condition~\ref{stuart-1} with the existence 
of two constants $c', c''>0$ such that 
\begin{align}\label{eq:stuart-lemma}  
	\scalar{v, \cC_1 v}{L_2(\cD)} 
	\leq c' 
	\scalar{v, \cC_2 v}{L_2(\cD)}  
	\quad 
	\text{and} 
	\quad 
	\scalar{v, \cC_2 v}{L_2(\cD)} 
	\leq c'' 
	\scalar{v, \cC_1 v}{L_2(\cD)},   
\end{align} 
see, e.g., \citep[][Lemma 6.15]{stuart2010}, 
where in our case 
$$
\cC_i := \cQ_i^{-1} = \tau_i^{-2}(\kappa_i^2 - \Delta)^{-2\beta_i}, \qquad i\in\{1,2\}.
$$ 

Let $\lambda_j^\Delta$, $j\in\bbN$, 
denote the positive eigenvalues 
(in nondecreasing order, counting multiplicity)  
of the Dirichlet or Neumann 
Laplacian $-\Delta\from\scrD(\Delta)\to L_2(\cD)$,  
where the type of homogeneous boundary conditions 
is the same 
as for $L_1$ and $L_2$. 
By Weyl's law~\eqref{e:lambdaj}, there exist 
constants $\underline{c},\bar{C}>0$ such that 
\[
	\underline{c} \, j^{2/d} \leq \lambda_j^{\Delta} \leq \bar{C} j^{2/d} 
	\qquad 
	\forall j \in \bbN. 
\] 
Furthermore, we let $\{e_j\}_{j\in\bbN}$ 
denote a system of eigenfunctions 
corresponding to $\bigl\{\lambda_j^\Delta \bigr\}_{j\in\bbN}$ 
which is orthonormal in $L_2(\cD)$. 

Now assume that $\beta_2 > \beta_1$ 
and let $j_0\in\bbN$ be sufficiently large such that 
$\kappa_1^2 < \bar{C} j_0^{2/d}$. 
Then, we have 
\[
	\frac{(\kappa_2^2 + \lambda_j^{\Delta})^{2\beta_2}}{
		(\kappa_1^2 + \lambda_j^{\Delta})^{2\beta_1}} 
	> 
	\frac{\underline{c}^{2\beta_2}}{(2\bar{C})^{2\beta_1}} \, 
	j^{4(\beta_2-\beta_1)/d} 
	\qquad 
	\forall j\in\bbN, \
	j \geq j_0. 
\]
For any $N\in\bbN$, 
we can thus  
choose $j_* = j_*(N)\in\bbN$ sufficiently large 
such that 
\[
	\scalar{e_{j_*}, \cC_1 e_{j_*}}{L_2(\cD)}  
	= 
	\tau_1^{-2} (\kappa_1^2 + \lambda_{j_*}^{\Delta})^{-2\beta_1} 
	> 
	N 
	\tau_2^{-2} (\kappa_2^2 + \lambda_{j_*}^{\Delta})^{-2\beta_2} 
	= 
	N 
	\scalar{e_{j_*}, \cC_2 e_{j_*}}{L_2(\cD)}, 
\]
in contradiction with the 
first relation in~\eqref{eq:stuart-lemma}, and  
$\mu_1, \mu_2$ are not equivalent 
if $\beta_1\neq \beta_2$. 
Furthermore, condition~\ref{stuart-1} 
is satisfied if $\beta_1=\beta_2 = \beta>d/4$, since then, 
for all $v\in L_2(\cD)$, 
\begin{align*} 
	\scalar{v, \cC_1 v}{L_2(\cD)}  
	&= 
	\sum_{j\in\bbN} \tau_1^{-2} 
	(\kappa_1^2 +\lambda_j^\Delta)^{-2\beta} \scalar{v,e_j}{L_2(\cD)}^2 \\
	&\leq 
	\tau_1^{-2} \tau_2^2 
	\left(\min\left\{ 1, \kappa_1^{2}\kappa_2^{-2} \right\} \right)^{-2\beta}
	\sum_{j\in\bbN} \tau_2^{-2} 
	(\kappa_2^2 +\lambda_j^\Delta)^{-2\beta} \scalar{v,e_j}{L_2(\cD)}^2 \\
	&= 
	\tau_1^{-2} \tau_2^2  
	\max\left\{ 1, \kappa_1^{-4\beta}\kappa_2^{4\beta} \right\}
	\scalar{v, \cC_2 v}{L_2(\cD)},   
\end{align*} 
and, similarly, 
$\scalar{v, \cC_2 v}{L_2(\cD)}  
	\leq
	\tau_2^{-2} \tau_1^2  
	\max\bigl\{ 1, \kappa_2^{-4\beta}\kappa_1^{4\beta} \bigr\} \scalar{v, \cC_1 v}{L_2(\cD)}$. 
Thus,~\eqref{eq:stuart-lemma} and condition~\ref{stuart-1} 
of Theorem~\ref{thm:stuart} hold. 

Assuming that $\beta_1=\beta_2 = \beta>d/4$, 
it remains now to show that condition~\ref{stuart-3} 
of Theorem~\ref{thm:stuart}
is satisfied if and only if $\tau_1=\tau_2$.  
To this end, we first note that the operator 
$T := \mathcal{C}_1^{-1/2}\mathcal{C}_2 
\mathcal{C}_1^{-1/2} - I$ has eigenfunctions 
$\{e_j\}_{j\in\bbN}$ and eigenvalues 
\[
	\tau_1^2 \tau_2^{-2} 
	(\kappa_1^2 + \lambda_j^\Delta)^{2\beta}
	(\kappa_2^2 + \lambda_j^\Delta)^{-2\beta} 
	-1 , 
	\qquad 
	j\in\bbN. 
\]
Therefore, $T$ is Hilbert--Schmidt in $\bar{E}$ 
if and only if 
\begin{equation}\label{eq:app:HS-T} 
	\sum_{j\in\bbN} 
	\left( 
	\tau_1^2 \tau_2^{-2} 
	(\kappa_1^2 + \lambda_j^\Delta)^{2\beta}
	(\kappa_2^2 + \lambda_j^\Delta)^{-2\beta} 
	-1
	\right)^2 < \infty. 
\end{equation} 
Since $x\mapsto (1+x)^{1/(2\beta)}$ 
is monotonically increasing in $x>0$, 
again by the Weyl asymptotic, 
for any $\varepsilon_0 > 0$, we can find 
an index 
$j_0\in\bbN$ such that 
\begin{equation}\label{eq:app:j0} 
	\frac{\kappa_2^2}{\lambda_j^\Delta} + 1 
	\leq 
	(1+\varepsilon_0)^{1/(2\beta)} 
	\qquad 
	\forall j\in\bbN, \ 
	j \geq j_0. 
\end{equation} 
Assume that $\tau_1\neq\tau_2$ and without loss of generality 
let $\tau_1> \tau_2$. 
Then pick $\varepsilon_0>0$ such that 
$\tau_1^{2} \tau_2^{-2} \geq 1 + 2\varepsilon_0$, 
and $j_0\in\bbN$ such that~\eqref{eq:app:j0} holds. 
These choices give 
\begin{align*} 
	\tau_1^2 \tau_2^{-2} 
	\left( 
	\frac{\kappa_1^2 + \lambda_j^\Delta}{\kappa_2^2 + \lambda_j^\Delta} 
	\right)^{2\beta} 
	\geq 
	\tau_1^2 \tau_2^{-2} 
	(\kappa_2^2/ \lambda_j^\Delta +1 )^{-2\beta} 
	\geq 
	(1 + 2\varepsilon_0) 
	(1 + \varepsilon_0)^{-1}>1,
\end{align*} 
for all $j\in\bbN$ with $j \geq j_0$. 
Thus, the series in~\eqref{eq:app:HS-T} 
is unbounded, 
\begin{align*} 
	\sum_{j\in\bbN} 
	\left( 
	\tau_1^2 \tau_2^{-2} 
	(\kappa_1^2 + \lambda_j^\Delta)^{2\beta}
	(\kappa_2^2 + \lambda_j^\Delta)^{-2\beta} 
	-1
	\right)^2 
	&\geq 
	\sum_{j\geq j_0} 
	\left( 
	(1 + 2\varepsilon_0) 
	(1 + \varepsilon_0)^{-1} 
	-1
	\right)^2 \\
	&= 
	\sum_{j\geq j_0} 
	\varepsilon_0^2 
	\left( 1 + \varepsilon_0  
	\right)^{-2}  
	=\infty .
\end{align*} 
We conclude that condition~\ref{stuart-3} 
of Theorem~\ref{thm:stuart} 
is not satisfied if $\tau_1\neq\tau_2$. 

Finally, let $\beta_1=\beta_2=\beta$, $\tau_1=\tau_2$ 
and assume without loss of generality that 
$\kappa_2 > \kappa_1$ (if $\kappa_1=\kappa_2$, 
\eqref{eq:app:HS-T} is evident). 
By the mean value theorem, applied for the 
function $x\mapsto x^{2\beta}$, for every $j\in\bbN$, 
there exists 
$\widetilde{\kappa}_j \in (\kappa_1, \kappa_2)$ 
such that 
\[
	(\kappa_2^2 + \lambda_j^\Delta)^{2\beta} - (\kappa_1^2 + \lambda_j^\Delta)^{2\beta} 
	= 
	2\beta (\widetilde{\kappa}_j^2 + \lambda_j^\Delta)^{2\beta-1} (\kappa_2^2-\kappa_1^2). 
\]
Hence, we can bound the series in~\eqref{eq:app:HS-T} 
as follows, 
\begin{align*} 
	\sum_{j\in\bbN} 
	&\left( 
	\frac{ (\kappa_1^2 + \lambda_j^\Delta)^{2\beta} - 
		(\kappa_2^2 + \lambda_j^\Delta)^{2\beta} } { 
	(\kappa_2^2 + \lambda_j^\Delta)^{2\beta} }
	\right)^2
	= 
	4 \beta^2 (\kappa_2^2-\kappa_1^2)^2 
	\sum_{j\in\bbN} 
	\left( 
	\frac{
		(\widetilde{\kappa}_j^2 + \lambda_j^\Delta)^{2\beta-1} }{
	(\kappa_2^2 + \lambda_j^\Delta)^{2\beta}} 
	\right)^2 \\
	&\leq 
	4 \beta^2 (\kappa_2^2-\kappa_1^2)^2  
	\sum_{j\in\bbN} 
	(\widetilde{\kappa}_j^2 + \lambda_j^\Delta)^{-2}
	\leq 
	4 \beta^2 (\kappa_2^2-\kappa_1^2)^2  
	\underline{c}^{-2} 
	\sum_{j\in\bbN} 
	j^{-4/d} < \infty.   
\end{align*} 
Here, $\sum_{j\in\bbN} j^{-4/d}$ converges, 
since $4/d > 1$ for $d\in\{1,2,3\}$. 
This proves equivalence of 
the Gaussian measures if 
$\beta_1=\beta_2$ and $\tau_1=\tau_2$. 
\end{proof}
\end{appendix}

\bibliographystyle{apa}
\bibliography{fractional-bib}

\end{document}